\title{EBS-CFL: Efficient and Byzantine-robust Secure Clustered Federated Learning}
\author{
    %Authors
    % All authors must be in the same font size and format.
    Zhiqiang Li\textsuperscript{\rm 1}, Haiyong Bao\textsuperscript{\rm 1}\thanks{Corresponding author.}, Menghong Guan\textsuperscript{\rm 1}, Hao Pan\textsuperscript{\rm 1}, Cheng Huang\textsuperscript{\rm 2}, Hong-Ning Dai\textsuperscript{\rm 3}
    % \equalcontrib
}
\begin{document}

\maketitle
\newtheorem{definition}{Definition} 
\newtheorem{theorem}{Theorem}
\newtheorem{assumption}{Assumption}
\begin{abstract}
Despite {\it federated learning} (FL)'s potential in collaborative learning, its performance has deteriorated due to the data heterogeneity of distributed users. Recently, {\it clustered federated learning} (CFL) has emerged to address this challenge by partitioning users into clusters according to their similarity. However, CFL faces difficulties in training when users are unwilling to share their cluster identities due to privacy concerns. 
To address these issues, we present an innovative Efficient and Robust Secure Aggregation scheme for CFL, dubbed \textbf{EBS-CFL}. 
The proposed EBS-CFL supports effectively training CFL while maintaining users' cluster identity confidentially. Moreover, it detects potential poisonous attacks without compromising individual client gradients by discarding negatively correlated gradients and aggregating positively correlated ones using a weighted approach. The server also authenticates correct gradient encoding by clients. 
EBS-CFL has high efficiency with client-side overhead $\mathcal{O}(ml + m^2)$ for communication and $\mathcal{O}(m^2l)$ for computation, where $m$ is the number of cluster identities, and $l$ is the gradient size. When $m = 1$, EBS-CFL's computational efficiency of client is at least $\mathcal{O}(\log{n})$ times better than comparison schemes, where $n$ is the number of clients.
In addition, we validate the scheme through extensive experiments.
Finally, we theoretically prove the scheme's security.
\end{abstract}
% Uncomment the following to link to your code, datasets, an extended version or similar.

\begin{links}
    \link{Code}{https://github.com/Lee-VA/EBS-CFL}
    % \link{Datasets}{https://aaai.org/example/datasets}
    % \link{Extended version}{https://aaai.org/example/extended-version}
\end{links}

\section{Introduction}
Federated learning (FL) \cite{FL_McMahan}, as a distributed machine learning paradigm, decentralizes model training across multiple local devices, thereby preserving data privacy. However, traditional FL methods operate under the assumption of independently and identically distributed (i.i.d.) data, which limits their effectiveness in real-world scenarios with non-i.i.d. data. To address this challenge, Clustered Federated Learning (CFL) \cite{CFL_propose} has been proposed. CFL enhances model performance on heterogeneous datasets by grouping participants into clusters based on data similarity, effectively extending FL's applicability to non-i.i.d. scenarios.

Most CFL methods, such as recent researches \cite{IFCA, FedMDS, EDSI}, mainly focus on improving accuracy or performance efficiency. For example, {\it Iterative Federated Clustering Algorithm (IFCA)} \cite{IFCA} is the first to provide convergence guarantees, which demonstrates its effectiveness in non-convex problems of neural networks. 
Despite numerous studies on CFLs, they still face challenges incurred by Byzantine attacks and privacy leakage. Particularly, they do not consider protecting against clients' cluster identities, which could potentially leak clients' important privacy.

As highlighted in DLG \cite{DLG}, the vulnerability of FL is exposed when gradient inference attacks target unencrypted gradients.
Though Differential Privacy (DP) methods provide some protection, their noise can hinder model convergence. The integration of DP with secure aggregation in researches \cite{fpsa, pbm} lessens DP's security requirements and tackles differential attacks, but efficiency is still problematic.
Secure aggregation methods \cite{PSA, CCESA, FastSecAgg, SwiftAgg_plus} can protect plaintext gradients, yet they fall short in dealing with Byzantine attacks. Robust algorithms, as achieved in PEFL \cite{PEFL}, rely on interaction between the server and the cloud platform, but this approach assumes no collusion between the two, which is a relatively weak security assumption. Lastly, SAFELearning \cite{SAFELearning} operates under the assumption of a single cloud server and supports only the FedAvg algorithm, overlooking compatibility with other federated learning algorithms.
The current research primarily faces two core challenges:
\begin{table*}[ht]
\small
\centering
\begin{tabular}{ccccc}
        \hline  
        \textbf{Approach} & \makecell[c]{\textbf{Malicious} \\ \textbf{Adversary}} & \makecell[c]{\textbf{Defend Against} \\ \textbf{Byzantine Attacks}} & \textbf{Single-Server} & \makecell[c]{\textbf{Compatible} \\ \textbf{Framework}} \\  
        \hline  
        CCESA \cite{CCESA} & $\times$ & $\times$ & $\checkmark$ & FedAvg \\
        FastSecAgg \cite{FastSecAgg} & $\times$ & $\times$ & $\checkmark$ & FedAvg \\  
        SwiftAgg+ \cite{SwiftAgg_plus} & $\times$ & $\times$ & $\checkmark$ & FedAvg \\ 
        PEFL \cite{PEFL} & $\checkmark$ & $\checkmark$ & $\times$ & FedAvg \\ 
        SAFELearning \cite{SAFELearning} & $\checkmark$ & $\checkmark$ & $\checkmark$ & FedAvg \\
        EBS-CFL & $\checkmark$ & $\checkmark$ & $\checkmark$ & FedAvg, IFCA\\ 
        \hline  
    \end{tabular}  
\caption{Comparing related secure federated learning schemes.}
\label{tab:compare}
\end{table*}

\begin{table*}[t]
\small
    \begin{tabular}{ccccc}
      \hline  
      ~ & \multicolumn{2}{c}{\textbf{Communication}}  &  \multicolumn{2}{c}{\textbf{Computation}}\\
      \textbf{Approach} & \textbf{Server} & \textbf{Per-client}  &  \textbf{Server} & \textbf{Per-client}\\  
      \hline  
      SecAgg & $\mathcal{O}(nl+sn^2)$ & $\mathcal{O}(l+sn)$ & $\mathcal{O}(n^2l)$ & $\mathcal{O}(nl+sn^2)$ \\  
      CCESA & $\mathcal{O}(nl+sn\sqrt{n\log{n}})$ & $\mathcal{O}(l+s\sqrt{n\log{n}})$ & $\mathcal{O}(nl\log{n})$ & $\mathcal{O}(l\sqrt{n\log{n}} + sn\log{n})$ \\  
      FastSecAgg & $\mathcal{O}(nl+n^2)$ & $\mathcal{O}(l+n)$ & $\mathcal{O}(l\log{n})$ & $\mathcal{O}(l\log{n})$ \\  
      EBS-CFL (m=1) & $\mathcal{O}(nl)$ & $\mathcal{O}(l)$ & $\mathcal{O}(n^2 + nl\log{n})$ & $\mathcal{O}(l)$ \\
      EBS-CFL & $\mathcal{O}(nml+n)$ & $\mathcal{O}(ml+m^2)$ & $\mathcal{O}(m^2(nml + nm^2 + n^2 + nl\log{n}))$ & $\mathcal{O}(m^2l)$ \\
      \hline  
    \end{tabular} 
\caption{Communication and computation overhead. For better clarity, we have included the case with the number of clusters $ m = 1 $ in the table, comparing it with other schemes that do not incorporate the concept of multiple clusters.}
\label{tab:communication_computation}
\end{table*}

\textbf{Lack of a scheme to protect clients' cluster identities.} Existing solutions require the acquisition of clients' cluster identities for training. However, there is no privacy preservation scheme for cluster identities. In this way, attackers may use the clients' cluster identities as additional information to launch inference attacks. 

\textbf{Lack of a comprehensive secure aggregation solution.} Despite the advent of secure aggregation, the comprehensive solution of ensuring both efficiency and Byzantine-robustness in a single cloud server setting with compatibility of diverse FL schemes is still largely elusive.

To address these challenges, we propose a novel {\it Efficient and Byzantine-robust Secure Clustered Federated Learning} (EBS-CFL) scheme.
In particular, we devise a {\it Robust Federated Clustering Algorithm (RFCA)} with the integration of a Byzantine robust algorithm based on cosine similarity \cite{FLTrust} into the aggregation process of IFCA \cite{IFCA}. Additionally, we protect clients' gradients and cluster identities by developing a secure aggregation scheme. This scheme allows the server to perform clustered aggregation without accessing the clients' gradients or clusters' identities while filtering out malicious gradients. We also introduce compression scheme to guarantee the scalability of communication and computation costs. 
Particularly in terms of communication overhead, our scheme is more advantageous compared to other schemes. When the number of clusters is $1$, the server's overhead is linearly related to the number of clients, while the overhead for an individual client is independent of the total number.
Our main contributions are summarized as follows.
\begin{itemize}
    \item Our scheme can protect the privacy of client gradients and cluster identities when training models. Specifically, we designed the Verifiable Orthogonal Matrix Confusion for Aggregation (VOMCA), which enables clients to encode both their cluster identities and gradient information simultaneously. This innovative design allows for the aggregation of gradients according to set cluster identities, without exposing clients' identities.
    \item We designed a secure aggregation scheme to achieve high efficiency, Byzantine-robustness, a single cloud server assumption, and compatibility with CFL schemes. 
    We use the assumption of a single secure cloud server, which is more secure than those based on multiple cloud servers.
    Specifically, we designed RFCA, which realizes Byzantine-robust CFL. Meanwhile, we designed the Secure ReLU Function Computation Mechanism (SRFC) to filter out malicious gradients in secure aggregation. Lastly, we designed a compression scheme to significantly enhance the overall efficiency of the system.

    \item We conduct theoretical analysis and extensive experiments to evaluate our proposed EBS-CFL. Through theoretical analysis, we explain the communication and computational complexity of our scheme. Moreover, we conducted experiments involving multiple variables related to communication and computation overhead, conducting detailed data analysis of the entire process of aggregation stage. And we incorporate experiments involving typical Byzantine attacks in FL to validate the robustness of our proposed EBS-CFL. Finally, we demonstrate the security of the EBS-CFL through security analysis.
\end{itemize}

Table~\ref{tab:compare} compares EBS-CFL with other state-of-the-art works in terms of key characteristics. 
Table~\ref{tab:communication_computation} compares EBS-CFL with other state-of-the-art works in terms of communication and computational complexity. 
Note that for detailed illustration of theoretical analysis of complexity, please refer to the appendix.

\begin{figure}[ht] 
    \centering  
    \includegraphics[width=1\columnwidth, clip]{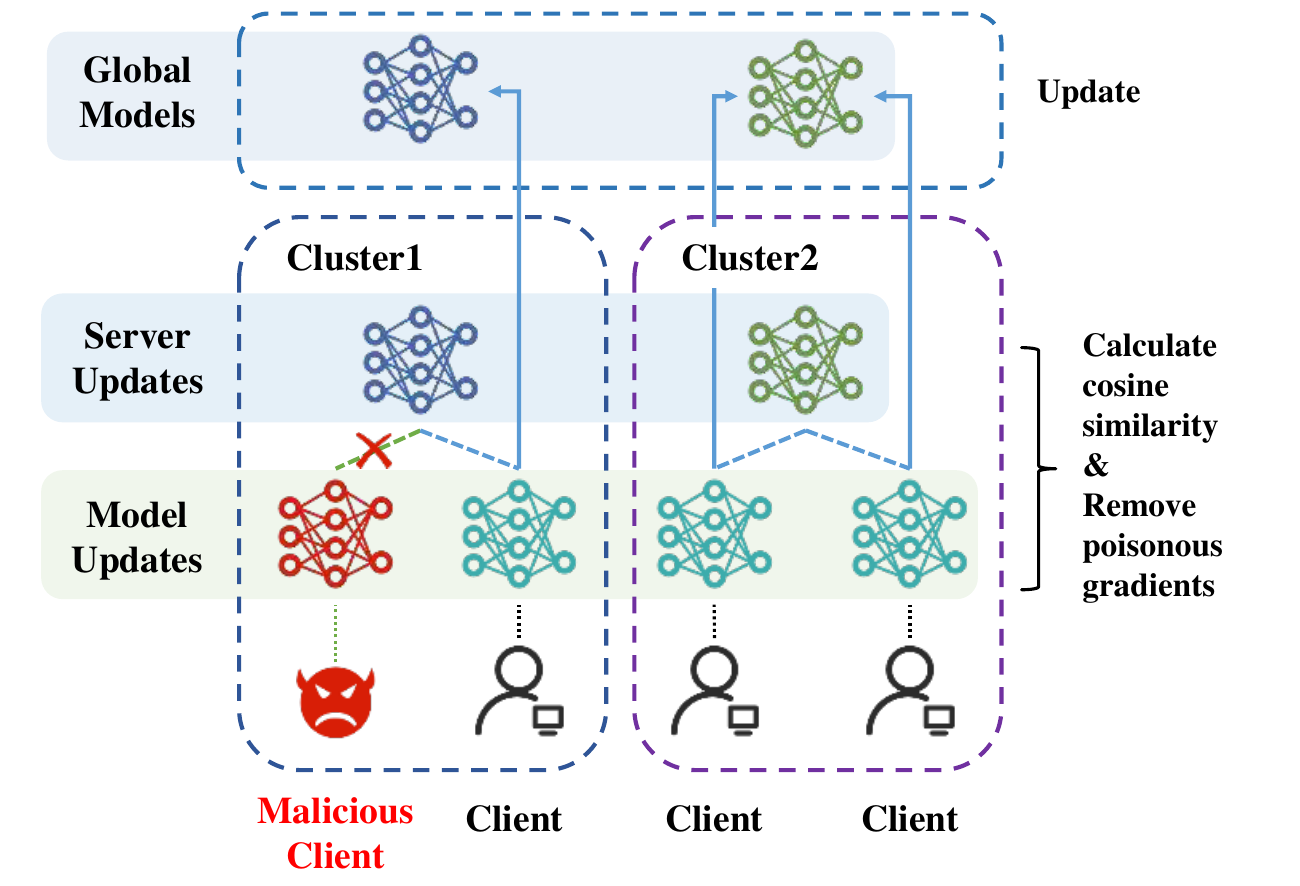}  
    \caption{Overview of RFCA.}  
    \label{fig:RFCA}
\end{figure}

\section{Related Work}
We provide an overview of relevant research from two perspectives: \textbf{efficient secure aggregation} and \textbf{defense against Byzantine attacks}.

In the realm of \textbf{efficient secure aggregation}, PSA \cite{PSA} introduced a lightweight protocol that allows the server to aggregate model updates from multiple clients without accessing individual updates. CCESA \cite{CCESA} presented a low-complexity scheme using sparse random graphs to ensure data privacy through a secret-sharing node topology. FastSecAgg \cite{FastSecAgg} developed a protocol, which leverages the Fast Fourier Transform (FFT) to create FastShare, a novel multi-secret sharing scheme that enables more efficient computation and communication. SwiftAgg+ \cite{SwiftAgg_plus} proposed a secure aggregation protocol that significantly reduces communication overhead while preserving privacy, achieving optimal communication loads and security guarantees. However, these schemes do not address Byzantine attacks.

In the realm of \textbf{defense against Byzantine attacks}, PEFL \cite{PEFL} proposed a framework that uses homomorphic encryption and penalizes poisoning behavior by extracting gradient data with logarithmic functions, addressing the problems of both privacy leakage and poisonous attack. SAFELearning \cite{SAFELearning} introduced technique that supports secure aggregation through Unintentional Random Grouping (ORG) and Partial Parameter Disclosure (PPD), effectively preventing malicious attacks. ShieldingFL \cite{ShieldingFL} proposed an adaptive client selection strategy to combat Byzantine attacks by filtering out honest clients, but it does not encrypt or obfuscate gradients.

All existing solutions build upon the traditional FL framework, which struggles with handling non-i.i.d. data. As the demand for handling non-i.i.d. data grows, the recently proposed CFL has provided a solution. Therefore, there is an urgent need for a secure aggregation framework that can be compatible with CFL.

\section{System Model}
Our system involves three entities: the \textbf{Clients}, the \textbf{Server}, and the \textbf{Key Distribution Center (KDC)}. The functions of each entity are as follows.

\begin{itemize}
    \item \textbf{Clients}: Clients encode gradients obtained from local training using encoding keys and upload the encoded gradients to the server. Malicious clients may attempt to bypass the robustness algorithm by altering encoded gradients or generating them incorrectly, thereby impacting the global model.
    \item \textbf{Server}: The server broadcasts the model, collects client-uploaded gradients for validation, and performs weighted clustering aggregation using our proposed EBS-CFL. Malicious servers may improperly execute the robustness algorithm to obtain target gradients.
    \item \textbf{Key Distribution Center (KDC)}: The KDC generates and distributes keys.
\end{itemize}

\section{Proposed Scheme}
This section introduces our proposed EBS-CFL. First, we present a Byzantine-robust clustered federated learning algorithm, RFCA. Based on this, we designed a secure aggregation scheme for privacy preservation. Finally, we improved the overall efficiency of the system through an efficient compression scheme.

\subsection{Robust Federated Clustering Algorithm (RFCA)}
We first design Algorithm~\ref{alg:cmu} to train clustered gradient, and then design Robust Federated Clustering Learning Algorithm (RFCA).
The overview of the RFCA scheme is illustrated Figure~\ref{fig:RFCA}.
\begin{algorithm}[tp]
\caption{ClusteredModelUpdate($\{\theta_i\}_{i=1}^m$, $D$, $b$, $\eta$, $T$)}  
\label{alg:cmu}
\begin{algorithmic}[1]
\Statex \textbf{Input:} initialization $\{\theta_i\}_{j=1}^m$; training datasets $D$; batch size $b$; learning rate $\eta$; number of iterations $T$. 
    \State $\{\theta^0_i\}_{i=1}^m \gets \{\theta_i\}_{i=1}^m$  
    \State $Loss(D; \theta^0_j) \gets \min(\{Loss(D_b; \theta^0_i\}_{i=1}^m))$ 
    \For{$t=1$ to $T$}
        \State Randomly sample a batch $D_b$ from $D$.
        \State $\theta^t_j \gets \theta^{t-1}_j - \eta \nabla Loss(D_b; \theta^{t-1}_j)$ 
    \EndFor  
    \State \Return $j, \theta^T_j - \theta^0_j$
\end{algorithmic}  
\end{algorithm}

The specific steps of RFCA are as follows.
\begin{itemize}
    \item The server sets $m$ clusters and initializes $m$ models. Then the server collects a small and clean public dataset and trains $m$ models to obtain $m$ server updates. Finally, $m$ server updates are obtained, represented as $\{g_0^j\}_{j=1}^m$.
    \item The clients run Algorithm~\ref{alg:cmu} to train local model.
    \item The server calculates cosine similarities between server updates and clients' gradients, and removes poisonous gradients. Then, for $1 \le i,j \le n$ and $1 \le k \le m$, the server calculates
    \begin{equation}
    \label{eq:rfca_aggregation}
        g^k = \frac{1}{\sum_{j=1}^{n} \text{ReLU}(c_j^k)} \sum_{i=1}^{n} \text{ReLU}(c_i^k) \frac{\lVert g_0^k \rVert}{\lVert g_i \rVert} \cdot g_i,
    \end{equation}
    where $c_i^k = s_{i,k}\frac{\langle g_i, g_0 \rangle}{\lVert g_i \rVert \cdot \lVert g_0 \rVert}$, $k$ represents the cluster identity, $\{s_{i,k}\}_{k=1}^m$ represents a one-hot encoding vector ($s_{i,k} = 1$, when $k$ matches the target identity), and $g_i$ represents the gradient submitted by the $i$-th client.
\end{itemize}

Note that for detailed illustration of the pseudo code of RFCA, please refer to the appendix.

\subsection{Verifiable Orthogonal Matrix Confusion for Aggregation (VOMCA)} 
VOMCA is a privacy-preserving aggregation technology. Similar to secret sharing, individual data is only decryptable after aggregation. It also ensures that clients cannot bypass robustness checks and submit malicious gradients.

\begin{definition}
    Given a set of matrices $\{M_i\}_{i=1}^l$, they are called mutually orthogonal matrices \cite{mo} if they satisfy the following condition: 
    \begin{equation}
        M_i^T M_j = 
        \begin{cases}
            I, &  i = j,\\ 
            O, &  i \neq j, 
        \end{cases}
    \end{equation}
    where $I$ and $O$ respectively represent an identity matrix and a zero matrix, for $1 \leq i, j \leq l$.
\end{definition}

For $1 \le i \le n$, the $i$-th client can encode the secret $x_i$ as follows.
\begin{equation}
    \chi_i = x_iM_i + R_iM_{i+n},
\end{equation}
where $\sum_{i=1}^n R_i = O$ and $\{M_i\}_{i=1}^{2n}$ are Mutually Orthogonal Matrices. Construct verification key $vk$ as $\{\sum_1^n \mathcal{V}_iM_{i+n}, \{\mathcal{V}_i \cdot R_i^T\}_{i=1}^n\}$. The verification function is as follows.
\begin{equation}
\label{eq:verify}
        \mathcal{V}(\chi_i, vk) =
        \begin{cases}
            1, & vk \cdot \chi_i^T = \mathcal{V}_i \cdot R_i^T, \\
            0, & vk \cdot \chi_i^T \ne \mathcal{V}_i \cdot R_i^T,
        \end{cases}
    \end{equation}
where $\mathcal{V}_i$ is a random matrix that has the same shape with $R_i$. And  construct decode key $dk$ as $\sum_1^{2n} M_i$, which satisfies the following relations, 
\begin{equation*}
    \chi_i \cdot dk^T = x_i+R_i, \quad \text{for} \; 1 \le i \le n,
\end{equation*}
\begin{equation*}
    (\sum_{i=1}^n \chi_i) \cdot dk^T = \sum_{i=1}^n x_i.
\end{equation*}
\begin{theorem}
\label{th:verify}
    For any $1 \le i \le n$, $a_i \in \{\chi_i, \chi_i'\}$, where $\chi_i'$ represents a ciphertext tampered by an adversary. The $\mathcal{V}$ satisfies
    \begin{equation}
    \label{eq:p_verify}
        Pr[\sum_{i=1}^k a_i dk^T \ne \sum_{i=1}^k x_i \wedge \forall i \in [n], \mathcal{V}(\chi_i, vk)=1] \le \epsilon,
    \end{equation}

where $\epsilon = \frac{1}{\lambda^l}$, $\lambda$ represents the probability of correctly predicting pseudo-random numbers without additional information, and $l$ represents the size of $x_i$.
\end{theorem}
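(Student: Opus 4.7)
The plan is to recast the event in Equation~\eqref{eq:p_verify} in terms of the adversarial deviations $D_i := a_i - \chi_i$ and then leverage the pseudorandomness of the verification matrices $\{\mathcal{V}_j\}_{j=1}^n$. First I would apply the honest-case identities $\chi_i\cdot dk^T = x_i + R_i$ (together with $\sum_i R_i = O$) and $vk\cdot \chi_i^T = \mathcal{V}_i R_i^T$ to rewrite the joint event as $\sum_i D_i\cdot dk^T \neq O$ together with $vk\cdot D_i^T = O$ for every $i$. This linearisation immediately forces at least one index $i^\star$ with $D_{i^\star}\neq O$, so it suffices to bound the single-index probability that such a $D_{i^\star}$ simultaneously passes verification and alters the aggregate.

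Next I would expand $vk\cdot D_i^T = \sum_j \mathcal{V}_j\,(M_{j+n} D_i^T)$ by substituting the definition of $vk$. Writing $u_{ij} := M_{j+n} D_i^T$, the verification constraint becomes the linear system $\sum_j \mathcal{V}_j u_{ij} = O$ in the pseudorandom matrices $\mathcal{V}_j$. Since the tamperer fixes $D_i$ without knowledge of the $\mathcal{V}_j$, if some $u_{ij_0}$ has a nonzero entry then a coordinate of $\mathcal{V}_{j_0}$ appears with nonzero coefficient in each of the $l$ coordinate-rows of the constraint; conditioning on all other entries, each of those $l$ rows is satisfied with probability at most $1/\lambda$, and because the relevant free entries lie in distinct rows of $\mathcal{V}_{j_0}$ these events are independent. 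This Schwartz--Zippel-style argument yields the factor $1/\lambda^l$.

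The main obstacle, which I would tackle last, is the corner case in which the adversary picks $D_i$ so that every mask projection $u_{ij}$ is identically zero, trivialising the constraint. Here the mutual orthogonality $M_k M_l^T = \delta_{kl}I$ is decisive: such a $D_i$ must be orthogonal to every mask subspace $\{M_{j+n}\}_{j=1}^n$, and a direct expansion of $D_i\cdot dk^T = \sum_l D_i M_l^T$ shows that any nonzero aggregate contribution would then have to be synthesised purely from the hidden signal matrices $\{M_l\}_{l=1}^n$. Since those matrices are withheld from the tamperer and only composite quantities are exposed during encoding, engineering such a $D_i$ is itself an event of pseudorandom-guessing probability at most $1/\lambda^l$. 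A final loose union bound over the at most $n$ tampered indices is absorbed into the stated $\epsilon = 1/\lambda^l$, completing the argument.
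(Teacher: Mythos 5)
Your core mechanism matches the paper's: you linearise via the deviation $D_i = a_i - \chi_i$, reduce passing verification to the constraint $\sum_j \mathcal{V}_j (M_{j+n} D_i^T) = O$ on the unknown pseudorandom matrices $\mathcal{V}_j$, and charge $1/\lambda$ per coordinate-row for $l$ rows; this is exactly the paper's step $\Pr[\mathcal{V}_i R^T = 0] = \frac{1}{\lambda^l}$, and your version is if anything more explicit about where the $l$ independent rows come from. However, your third paragraph contains a genuine gap. You claim that a deviation $D_i$ lying entirely in the span of the signal matrices $\{M_l\}_{l=1}^n$ (so that every mask projection $u_{ij}$ vanishes) can only be engineered with pseudorandom-guessing probability $1/\lambda^l$, because "those matrices are withheld from the tamperer." This is false in the scheme's threat model: the tampering adversaries explicitly include malicious clients, and client $i$ \emph{must} hold its own encoding key (it computes $\chi_i = x_i M_i + R_i M_{i+n}$), so it can set $D_i = (x_i' - x_i) M_i$ with no guessing at all. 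Such a $D_i$ passes verification with probability $1$ and shifts the decoded aggregate, so under your reading of the event --- with $x_i$ the original honest values --- the bound would simply be violated.

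The paper avoids this not by a probabilistic argument but by a definitional move you have not made: it \emph{defines} $\{x_i\}_{i=1}^n$ through the submitted ciphertexts, via $\sum_i x_i = \sum_i \chi_i \cdot \sum_i M_i^T$, so that substituting a different data value is a change of input rather than an integrity violation, and the event $\sum_i a_i dk^T \ne \sum_i x_i$ becomes equivalent to an inconsistency in the mask components, $\sum_i R_i' \ne \sum_i R_i$. Only that residual event needs the $\mathcal{V}$-guessing bound (input substitution is instead constrained elsewhere, e.g.\ by the $\lVert \delta_i \rVert^2 = 3$ check in the aggregation phase). To repair your proof, replace the third paragraph with this reduction: restrict attention to deviations whose mask projections do not all vanish, which is exactly the event the theorem quantifies. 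A smaller point: your closing remark that a union bound over up to $n$ indices is "absorbed into" $\epsilon = 1/\lambda^l$ is not justified --- a union bound gives $n/\lambda^l$ --- though the paper's own passage from $\exists i$ to a single-index probability is equally cavalier on this score.
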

\begin{proof}
    For detailed illustration of the proof process, please refer to the appendix.
\end{proof}
\subsection{Secure ReLU Function Computation Mechanism (SRFC)}
We designed SRFC to exclude gradients with negative cosine similarity, while ensuring security.

We define the function $\mathcal{F}$ to square each element in the matrix, preserving their original signs. Meanwhile, we define $\mathcal{F}^{-1}$ to calculate the square root of each element's absolute value, also preserving their original signs. The definitions of $\mathcal{F}$ and $\mathcal{F}^{-1}$ are as follows.
\begin{equation*}
    \mathcal{F}(A) = (\frac{(A)_{ij}^3}{|(A)_{ij}|}), \;
    \mathcal{F}^{-1}(A) = (\frac{(A)_{ij}\sqrt{|(A)_{ij}|}}{|(A)_{ij}|}),
\end{equation*}
where $A \in \mathbb{R}^{a \times b}$. 
For $1 \le i \le n$, input $x_i \in \mathbb{R}$, define the encode of $x_i$ as
\begin{equation*}
    E(x_i)=x_iM_i + \frac{1}{2}\zeta_iM_{i+n} + R_i.
\end{equation*}
Construct $\{\alpha_i\}_{i=1}^n$ as follows.
\begin{equation*}
    \alpha_i = x_i \cdot M_i^TR_i'M_i + M_{i+n}^TR_i'M_{i+n}, 
\end{equation*}
where $\forall M \in \{M_i\}_{i=1}^{2n}, M \in \mathbb{R}^{lm \times 2lmn}$, and $\{M_i\}_{i=1}^{2n}$ are Mutually Orthogonal Matrices.
And construct $\beta$, $\tau_1$, $\tau_2$, and $\tau_3$ as follows.
\begin{equation}
\small
\label{eq:ReLU_params}
    \left\{
    \begin{aligned}
        \beta =& \sum_{i=1}^{2n} M_i, \\
        \tau_1 =& \sum_{i=1}^n M_i^T{R_i'}^{-1}M_i 
         + M_{i+n}^T{R_i'}^{-1}(R_i^{(0)} + \zeta_i \cdot M_{i+n}), \\
        \tau_2 =& \sum_{i=1}^n M_i^T{R_i'}^{-2}(M_i \circ M_i) \\
        &+ M_{i+n}^T{R_i'}^{-2}(R_i^{(1)} \circ R_i^{(1)} + R_i^{(2)}), \\
        \tau_3 =& \sum_{i=1}^n M_i^T{R_i'}^{-2}\mathcal{F}(2 \cdot M_i \circ R_i^{(1)}) 
        - M_{i+n}^T{R_i'}^{-2}\mathcal{F}(R_i^{(2)}),
    \end{aligned}
    \right.
\end{equation}

where $\circ$ denotes hadamard product, $\zeta_i \in \mathbb{R}^{lm \times lm}$, $\sum_{i=1}^n \zeta_i = 0$, $R_i' \in \mathbb{R}^{lm \times lm}$, $R_i'$ is an invertible matrix, $R_i, R_i^{(0)}, R_i^{(1)}, R_i^{(2)} \in \mathbb{R}^{lm \times 2lmn}$, $R_i=R_i^{(0)} + R_i^{(1)}$, $\sum_{i=1}^n R_i = 0$. 

For $1 \le j \le lm$, $1 \le k \le 2lmn$, $r^{(1)}_{jk}$ and $r^{(2)}_{jk}$ are random numbers, the $R_i^{(1)}$ and $R_i^{(2)}$ satisfy that
if $(M_i)_{jk} \le 0 $,
\begin{equation}
\label{eq:random_sample_matrix1}
    \text{Pr}[(R_i^{(1)})_{jk} = r^{(1)}_{jk} \wedge (R_i^{(2)})_{jk} = 0] = 1,
\end{equation}
where $0 < r_{jk} < \max(\{|x_i|\}_{i=1}^n)$; if $(M_i)_{jk} > 0$,
\begin{equation}
\label{eq:random_sample_matrix2}
    \text{Pr}[(R_i^{(1)})_{jk} = 0 \wedge (R_i^{(2)})_{jk} = r^{(2)}_{jk}] = 1,
\end{equation}
where $0 < r^{(2)}_{jk} < 2 \cdot \max(\{|x_i|\}_{i=1}^n)^2$. 

The principle of SRFC is as follows. Firstly, by squaring the matrix elements and then taking the square root, we can obtain the absolute value of the corresponding elements. Then, by adding this absolute value to the original element, we can obtain the output of the ReLU function. 

The $\alpha$ is the encoding of $x$, and when it performs matrix multiplication operation with itself, the decoding can yield $x^2$. The function of $\beta$ is to perform decoding, while $\tau_1$, $\tau_2$, and $\tau_3$ help to compute the square of the target element and then compute the square root. Since the entire computation process requires privacy preservation, a random number factor is introduced and incorporated into all parameters.

The $\sum_{i=1}^n E(ReLU(x))$ and $\sum_{i=1}^n ReLU(x_i)$ can be calculated as theorem \ref{th:srfc}.
\begin{theorem}
\label{th:srfc}
    The $\sum_{i=1}^n E(ReLU(x))$ can be calculated as
    \begin{equation}
        \label{eq:ReLU_encode}
        \begin{aligned}
            &\sum_{i=1}^n E(ReLU(x_i)) = \frac{1}{2}(\beta \cdot (\sum_{i=1}^n \alpha_i) \cdot \tau_1 \\
            &+ \sum_{i=1}^n \mathcal{F}^{-1}(\beta \cdot \alpha_i^2 \cdot \tau_2 + \mathcal{F}^{-1}(\beta \cdot \alpha_i^2 \cdot \tau_3))),
        \end{aligned}
    \end{equation}
    and the $\sum_{i=1}^n ReLU(x_i)$ can be calculated as
    \begin{equation}
        \label{eq:ReLU_decode}
        \sum_{i=1}^n ReLU(x_i) = \frac{1}{l}tr(\sum_{i=1}^n E(ReLU(x_i) \cdot \beta^T).
    \end{equation}
\end{theorem}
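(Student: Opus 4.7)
The plan is to reduce the claim to the elementary identity $\text{ReLU}(x) = \tfrac{1}{2}(x + |x|)$ and verify the two halves of \eqref{eq:ReLU_encode} separately, in each case exploiting the mutually-orthogonal-matrix relations $M_i^T M_j = \delta_{ij} I$ together with the mask-cancellation structure built into $\tau_1, \tau_2, \tau_3$. Equation \eqref{eq:ReLU_decode} will then follow as a corollary of a single decoding lemma applied to the aggregated encoding.

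First I would prove that decoding lemma: right-multiplying $\sum_i E(x_i)$ by $\beta^T = (\sum_j M_j)^T$ annihilates every $M_k$-summand that is not paired with its own transpose, so the sum collapses to $(\sum_i x_i)\cdot I$ once the random masks vanish via $\sum_i \zeta_i = 0$ and $\sum_i R_i = O$. Taking the trace and dividing by $l$ then delivers \eqref{eq:ReLU_decode}. The same orthogonality argument shows that $\beta \cdot (\sum_i \alpha_i) \cdot \tau_1$ reproduces $\sum_i E(x_i)$: the ${R_i'}^{-1}$ factor in $\tau_1$ cancels the $R_i'$ inside $\alpha_i$, while the $M_{i+n}^T {R_i'}^{-1}(R_i^{(0)} + \zeta_i M_{i+n})$ summand of $\tau_1$ is engineered to reinject exactly the $\tfrac{1}{2}\zeta_i M_{i+n} + R_i$ noise carried by $E(x_i)$.

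For the $|x_i|$ half, I would expand $\alpha_i^2$ into its block-diagonal and cross parts, premultiply by $\beta$ to zero out the off-block contributions, and then read off the effect of $\tau_2$ and $\tau_3$ defined in \eqref{eq:ReLU_params}. The $\tau_2$ branch is designed to extract $x_i^2 M_i + (R_i^{(1)} \circ R_i^{(1)} + R_i^{(2)}) M_{i+n}$, while $\mathcal{F}^{-1}(\beta \alpha_i^2 \tau_3)$ subtracts the Hadamard cross term $2\, M_i \circ R_i^{(1)}$ that arises when the matrix square of $\alpha_i$ is unfolded. The crucial coordinate-wise invariant coming from \eqref{eq:random_sample_matrix1}--\eqref{eq:random_sample_matrix2} is that exactly one of $R_i^{(1)}, R_i^{(2)}$ is nonzero on each entry; this is precisely what is needed so that the nested $\mathcal{F}^{-1}$ recovers a sign-preserving square root and produces $E(|x_i|)$. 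Adding this to the $\sum_i E(x_i)$ piece and halving yields \eqref{eq:ReLU_encode}.

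The hardest part will be the coordinate-wise bookkeeping in this second step: I have to check that the magnitude bounds $0 < r^{(1)}_{jk} < \max_i|x_i|$ and $0 < r^{(2)}_{jk} < 2\max_i|x_i|^2$ are tight enough that the intermediate quantity whose sign $\mathcal{F}^{-1}$ preserves actually has the same sign as $x_i M_i$ on every entry, and that the Hadamard-versus-matrix-product distinction in $\mathcal{F}(2 M_i \circ R_i^{(1)})$ really matches the corresponding cross term coming from the matrix square of $\alpha_i$. I would discharge this as a two-case analysis on the sign of $(M_i)_{jk}$, handling $(M_i)_{jk} \le 0$ via \eqref{eq:random_sample_matrix1} and $(M_i)_{jk} > 0$ via \eqref{eq:random_sample_matrix2}, and then stitching the two cases together using $M_i^T M_j = \delta_{ij} I$ to assemble the global identity.
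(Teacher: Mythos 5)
Your proposal takes essentially the same route as the paper's proof: the paper likewise verifies that $\beta \cdot (\sum_{i=1}^n \alpha_i) \cdot \tau_1$ collapses by orthogonality to $\sum_{i=1}^n x_iM_i + R_i^{(0)} + \zeta_i M_{i+n}$, shows via the nested $\mathcal{F}^{-1}$ computation with $\tau_2, \tau_3$ and the exactly-one-of-$R_i^{(1)}, R_i^{(2)}$-nonzero structure that the second branch yields $|x_i|M_i + R_i^{(1)}$, combines the branches through $\mathrm{ReLU}(x) = \frac{1}{2}(x + |x|)$ using $\sum_{i=1}^n R_i = O$, and decodes with the trace identity. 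If anything, your explicit per-entry two-case analysis on the sign of $(M_i)_{jk}$ and your check that the bounds on $r^{(1)}_{jk}, r^{(2)}_{jk}$ make the sign-preserving square root recover the intended value is more careful than the paper, which asserts the corresponding intermediate identities without spelling out that bookkeeping.
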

\begin{proof}
    For detailed illustration of the proof process, please refer to the appendix.
    % \ref{pf:srfc}.
\end{proof} 
\subsection{Secure Aggregation Scheme}
Based on RFCA, we design our secure aggregation scheme. In each round of training, the number of participating clients is denoted by $n$, the number of cluster identities is denoted by $m$, and the dimension of the gradients is denoted by $l$.

\textbf{Initialization:} KDC pre-generates multiple sets of Mutually Orthogonal Matrices $\{A_i | A_i \in \mathbb{R}^{l \times lm}\}_{i=1}^m$, $\{M_i | M_i \in \mathbb{R}^{lm \times 2lmn}\}_{i=1}^{2n}$, and $\{M_i' | M_i' \in \mathbb{R}^{lm \times 3lmn}\}_{i=1}^{3n}$. The server sends the update gradient $\{g_0^i\}_{j=1}^m$ to KDC. Then KDC generates $\beta$, $\tau_1$, $\tau_2$, and $\tau_3$ as Eq.\eqref{eq:ReLU_params}. And KDC generates the encode keys $\{ek_i\}_{i=1}^n$, $\{\alpha'_i\}_{i=1}^{2lmn}$, $vk$, and $dk$ as follows.

The $(M)_i$ represents the $i$-th row of matrix $M$, for $1 \le i \le n$ and $1 \le j \le m$, 
\begin{equation}
\label{eq:ek}
     ek_i = \{ek_i^0, ek_i^1\}= \{M_i', \sum_{j=1}^m R''_{ij}A_jM'_{i+n} + \mu_iM'_{i+2n}\},
\end{equation}
where $\sum_{i=1}^n R''_{ij} = 0$, $||R''_{ij}||=1$, $\sum_{i=1}^n \mu_i = 0$, and $||\mu_i||=1$.
And for $1 \le i \le 2lmn$,
\begin{equation}
\label{eq:alpha_dot}
     \begin{aligned}
        \alpha'_i =& \sum_{j=1}^n {M_j'}^T (\sum_{k=1}^m A_k^T(g_0^k)^T) (M_j^TR_j'
        M_j)_i \\
        +& {M_{j+2n}'}^T {\mu_j}^{-1} (M_{j+n}^TR_j'M_{j+n})_i. 
    \end{aligned}
\end{equation}
\begin{equation}
    \left\{
    \begin{aligned}
        vk =& \{\sum_{i=1}^n \mathcal{V}_i\sum_{j=1}^m A_j M'_{i+n},  \{\mathcal{V}_i \cdot \sum_{j=1}^m {R''_{ij}}^T \}_{i=1}^n\}, \\
        dk =& \sum_{i=1}^n M_i^TM'_i + M_{i+n}^T\sum_{j=1}^n (M'_{j+n} + M'_{j+2n}),
    \end{aligned}
    \right.
\end{equation}
where $\{\mathcal{V}_i\}_{i=1}^n$ are random matrices generated by KDC.

\textbf{Broadcast Phase:} 
In the first round, the KDC sends $ek_i$ and $\{A_i\}_{i=1}^m$ to the $i$-th client. For subsequent rounds, the KDC updates $ek_i^1$ by updating $R''_{ij}$ and sends the updated $ek_i^1$ to the $i$-th client. The server first classifies the current models using the public dataset, then sends the global model to the selected clients.

\textbf{Training Phase:} 
The training phase follows the same procedure as RFCA. 
% Assuming the cluster of the $i$-th client is $j$, the gradient is encoded as $\delta_i = \frac{g_i}{\lVert g_i \rVert}A_jM'_i + \sum_{j=1}^m R''_{ij}A_jM'_{i+n} + \mu_iM'_{i+2n}$.
The cluster corresponding to the $i$-th client is denoted as $j$, the gradient is encoded as $\delta_i = \frac{g_i}{\lVert g_i \rVert}A_jM'_i + \sum_{j=1}^m R''_{ij}A_jM'_{i+n} + \mu_iM'_{i+2n}$. The encoded gradient $\delta_i$ is then uploaded.

\textbf{Aggregation Phase:}
The KDC sends $\{\alpha_i\}_{i=1}^{lm}$, $\beta$, $\tau_1$, $\tau_2$, $dk$, and $vk$ to the server. The server then verifies each client as described in Eq.\eqref{eq:verify}, and checks if $\lVert \delta_i \rVert^2 = 3$ for $1 \le i \le n$. 
If the verification in Eq.\eqref{eq:verify} passes, it can be conclusively demonstrated that the client is unable to manipulate the ciphertext. If $\lVert \delta_i \rVert^2 = 3$, it can be inferred that the client has performed a normalization operation $\frac{g_i}{\lVert g_i \rVert}$.
If the verification fails, the server will request the failed clients to resend their encoded gradients, or the server will restart the process. For $1 \le i \le n$, the server calculates $\alpha_i$ as follows,

\begin{equation}
    (\alpha_i)_j = \delta_i \cdot \alpha'_j.
\end{equation}

The server calculates $\sum_{i=1}^n E(ReLU(w_i))$ and $\sum_{i=1}^n ReLU(w_i)$ as Eq.\eqref{eq:ReLU_encode} and Eq.\eqref{eq:ReLU_decode}. 

Finally, the server calculates the aggregated gradient $g$ as follows,
\begin{equation}
    dk' = \frac{\sum_{i=1}^n E(ReLU(w_i)) \cdot dk}{\sum_{i=1}^n ReLU(w_i)},
\end{equation}
\begin{equation}
    g = (\sum_{i=1}^n \delta_i) \cdot {dk'}^T.
\end{equation}

\textbf{Update Phase:} For $1 \le j \le m$, the server updates as follows.
\begin{equation}
    \theta^{(t)}_j = \theta^{(t-1)}_j - \eta \cdot ||g_0^j|| \cdot gA_j^T.
\end{equation}
\subsection{Secure Key Transformation Mechanism (SKT)}
For $1 \le i \le n$, $1 \le j \le m$, and converse function $\mathcal{C}: i \rightarrow j$, construct Transformation key $tk_i$ as $ M_i^TM_{\mathcal{C}(i)}' + M_{i+n}^T(M_{\mathcal{C}(i+n)}' + R_i^{-1} \cdot R_i')$ for the key transformation of $\chi_i$, and the Transformation function is as follows.
\begin{equation}
     \mathcal{T}(\chi_i, tk_i) = \chi_i \cdot tk_i = \chi_i' + R_i',
\end{equation}
where $c$ represents a constant, $\{M_i\}_{i=1}^{n}$ and $\{M_j'\}_{j=1}^{m}$ are Mutually Orthogonal Matrices, $\chi_i = x_iM_i + R_iM_{i+n}$, and $\chi'_i = x_iM_{\mathcal{C}(i)}' + R_iM_{\mathcal{C}(i+n)}'$. It satisfies 
\begin{equation}
\label{eq:transfer_deocde}
    (\sum_{i=1}^m \chi_i') \cdot (dk')^T = \sum_{i=1}^n c \cdot x_i,
\end{equation}
where $dk' = \sum_{i=1}^{m}M_i'$. 

\subsection{Compression Scheme}

We design gradient segmentation to reduce gradient dimensionality for individual aggregations and layered aggregation to minimize the number of clients involved in each aggregation.

\textbf{Gradient Segmentation:}
The $i$-th client flattens the local gradients into a vector by rows, and concatenates them layer by layer to form a vector $g_i$. Let the vector $g_i$ be divided into $s$ segments, denoted as $\{g_{ij}\}_{j=1}^s$, each segment has a length of $t$, which satisfies that $s \cdot t \ge l$. For $1 \le i \le n$, $1 \le j \le m$, and $1 \le k \le s$, the $R''$ and $\mu$ satisfy
\begin{equation*}
    \sum_{k=1}^s ||R_{ijk}''||^2 = 1, \;
    ||\mu_{ijk}||^2 = \frac{1}{s}.
\end{equation*}
For $1 \le i \le n$, $\mathcal{A}(i)$ represents the $i$-th client, the corresponding inner products are calculated as follows.
\begin{equation}
\label{eq:gs_innerP}
    g_i{g_0^{\mathcal{A}(i)}}^T = \sum_{j=1}^s g_{ij}{g_{0,j}^{\mathcal{A}(i)}}^T.
\end{equation}
For $1 \le i \le 3mnt$, $1 \le j \le n$, and $1 \le k \le s$, following the gradient segmentation, Eq.\eqref{eq:alpha_dot} should be reformulated as follows.
\begin{equation*}
     \begin{aligned}
        \alpha'_{ik} =& \sum_{j=1}^n {M_j'}^T (\sum_{k=1}^m A_k^T(g_0^k)^T) (M_j^T
        M_j)_i \\
        +& \frac{1}{s}{M_{j+2n}'}^T {\mu_{jk}}^{-1} (M_{j+n}^TR_j'M_{j+n})_i. 
    \end{aligned}
\end{equation*}
Finally, aggregate the corresponding segments and concatenate the segments back to the original gradients.

\textbf{Layered Aggregation:}
The total number of clients is denoted as $n$, and the clients are divided into $\xi$ groups,  each of which is with a maximum size of $\lceil \frac{n}{\xi} \rceil$.

KDC generates $ek$, $vk$, $\alpha_i'$, and transformation keys for each group. Then, KDC generates $\beta$, $\tau_1$, $\tau_2$, and $\tau_3$. In Addition, $\zeta$, $\mu$, $R''$, and $\{A_i\}_{i=1}^m$ are shared across all clients.
The specific steps for aggregation are as follows.

\begin{itemize}
    \item First, the steps before aggregation for each group are the same as the secure aggregation scheme. In the aggregation phase, the server will calculate $\sum_{i=1}^n E(ReLU(w_i))$ as Eq.\eqref{eq:ReLU_encode}. However, since each group cannot eliminate the random matrices and lacks a corresponding $dk$ prior to key transformation, the decoding is not performed.
    \item Second, the server performs keys transformation in each group.
    \item Finally, the server decodes the encoded gradients and subsequently updates the global model.
\end{itemize}
Each transformation can be regarded as the aggregation of each group. The number of transformation rounds can be denoted as $x$. The total size $\mathcal{S}(n, \{\xi_i\}_{i=1}^x)$ of transformation keys can be expressed as follows.
\begin{equation*}
    \mathcal{S}(n, \{\xi_i\}_{i=1}^x) = 2lm\xi_x + \sum_{i=1}^{x-1} 4l^2m^2\xi_i\xi_{i+1}\prod_{j=i+1}^x \xi_j,
\end{equation*}
where the $\{\xi_i\}_{i=1}^x$ satisfies that $\prod_{i=1}^x \xi_i \ge n$. 
For $1 \le i \le n$, $1 \le j \le x$, the key transformation is as follow.
\begin{equation}
\label{eq:la}
    \delta_i^j = \mathcal{T}(\delta_i^{j-1}, \sum_{k=1+\nu}^{\xi_j+\nu} tk_k^j),
\end{equation}
where $\nu = i-(i\mod{\xi_j})$, and $\delta_i^0 = \delta_i$. The $\alpha_i$ has the same transformation process as Eq.\eqref{eq:la}.

\section{Convergence Guarantees}
In this section, we provide convergence guarantees for RFCA. Our analysis builds upon the assumptions of IFCA, with additional considerations specific to the Byzantine-robustness algorithm.

\begin{theorem}
\label{th:convergence}
Let all assumptions hold, and the step size $\gamma$ be chosen as $\gamma = \frac{1}{L}$. Suppose that at a certain iteration of the RFCA algorithm, the parameter vector $\theta_j$ satisfies:  
$\|\theta_j - \theta^*_j\| \leq \left(\frac{1}{2} - \alpha\right) \frac{\lambda}{L} \Delta,$
where $0 < \alpha < \frac{1}{2}$. Let $\theta^+_j$ denote the next iterate, $m$ is the total number of clients, $k$ is the number of clusters, and $w = \sum_{i=1}^m w_i$. For any fixed $j \in [k]$ and $\delta \in (0, 1)$, the following holds with probability at least $1 - \delta$,
\begin{equation}
    \begin{aligned}
        \|\theta^+_j - \theta^*_j\| &\leq \left(1 - \frac{p\lambda}{8L}\right) \|\theta_j - \theta^*_j\| + \frac{c_0\sqrt{2v^2 + 1}}{\delta L\sqrt{pmn'}} \\
        &+ c_1\frac{\eta^2m}{\delta \alpha^2\lambda^2\Delta^4 wn'} + c_2\frac{\eta k\sqrt{2kmv^2 + km}}{\delta^{\frac{3}{2}}\alpha\lambda L\Delta^2wn'}.
    \end{aligned}
\end{equation}
\end{theorem}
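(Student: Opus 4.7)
The plan is to adapt the IFCA convergence argument \cite{IFCA} and then add, on top of it, the perturbation introduced by the cosine-similarity-based robust aggregation of RFCA (Eq.~\eqref{eq:rfca_aggregation}). Throughout, I would maintain the standard IFCA decomposition
\[
\theta^+_j - \theta^*_j \;=\; \bigl(\theta_j - \theta^*_j\bigr) - \gamma\,\nabla F_j(\theta_j) \;-\; \gamma\bigl(g^j - \nabla F_j(\theta_j)\bigr),
\]
where $g^j$ is the RFCA aggregate in cluster $j$ and $\nabla F_j$ is the true population gradient of that cluster. With $\gamma = 1/L$ and $\lambda$-strong convexity the first two terms on the right-hand side already contract at rate $(1 - \lambda/L)\|\theta_j-\theta^*_j\|$. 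The remaining work is therefore to bound the \emph{noise/bias term} $\|g^j - \nabla F_j(\theta_j)\|$ in high probability.

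First I would reproduce the IFCA clustering-misassignment analysis. Under the hypothesis $\|\theta_j-\theta^*_j\| \le (\tfrac12-\alpha)\tfrac{\lambda}{L}\Delta$, IFCA shows that each client whose true cluster is $j$ is assigned to $j$ with probability at least $p$, so the effective number of correctly clustered samples in the aggregate is on the order of $p m n'$. This is what produces the contraction rate $1 - p\lambda/(8L)$ (the extra factor $1/8$ absorbs the probability of miscluster events and the slack $\alpha$), and, by a Chebyshev/Markov bound on the sample mean of bounded-variance stochastic gradients (variance parameter $v^2$), it produces the first statistical-noise term $c_0\sqrt{2v^2+1}/(\delta L\sqrt{pmn'})$.

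Next I would handle the two genuinely new terms, which come from RFCA's reweighting. Writing the RFCA aggregate as a convex combination $g^j = \sum_i \omega_i \tfrac{\|g_0^j\|}{\|g_i\|} g_i$ with $\omega_i = \mathrm{ReLU}(c_i^j)/\sum_k \mathrm{ReLU}(c_k^j)$, I would split $g^j - \nabla F_j(\theta_j)$ into (i) the bias introduced by normalization $\|g_0^j\|/\|g_i\|$ combined with the ReLU filter, and (ii) the residual variance of the reweighted mean. For (i), I would use smoothness (bounded Hessian $\le L$) to relate $\|g_i\|$ to $\Delta$ and $\eta$, so that the multiplicative distortion introduced by normalization is controlled by $\eta/(\alpha\lambda\Delta^2)$; the ReLU step discards precisely those adversarial gradients with negative cosine similarity to the trusted reference $g_0^j$, and therefore the remaining bias is quadratic in $\eta$ and inversely proportional to $w=\sum_i w_i$, giving the $c_1\eta^2 m/(\delta\alpha^2\lambda^2\Delta^4 w n')$ term via one more Markov step for the $1/\delta$ factor. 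For (ii), I would apply a union bound over the $k$ clusters after bounding the variance of the reweighted sum by $2kmv^2+km$ (each cluster contributes a factor $k$ through the misrouting of honest clients and a factor $m$ through the cluster identities), producing the $c_2$ term; the extra $\delta^{-3/2}$ arises because the union bound interacts with the Markov inequality used to convert a second-moment bound into a high-probability bound.

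Finally I would add the contraction and noise contributions and choose the numerical constants $c_0,c_1,c_2$ large enough to absorb universal factors. The main obstacle is step (i): handling the non-linearity of $\mathrm{ReLU}(c_i^j)$ together with the normalization $\|g_0^j\|/\|g_i\|$ so that the adversarial contribution is provably annihilated while the honest contribution retains enough mass to keep the denominator $\sum_k \mathrm{ReLU}(c_k^j)$ bounded away from zero with probability at least $1-\delta$. Showing this requires combining the hypothesis on $\|\theta_j-\theta^*_j\|$, the Lipschitz lower bound on $\langle g_i,g_0^j\rangle$ for honest $i$, and a one-sided concentration bound on the number of retained honest clients; everything else is routine manipulation of IFCA's inequalities.
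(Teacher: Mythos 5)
Your overall skeleton (IFCA decomposition plus added error terms) points in the right direction, but it diverges from the paper in a way that leaves the central step unproven, and it misattributes where the second and third error terms come from. In the paper's proof the aggregate over cluster $1$ is split by \emph{cluster membership}, $S_1 = (S_1\cap S_1^*)\cup(S_1\cap\overline{S_1^*})$, and the terms $c_1\eta^2 m/(\delta\alpha^2\lambda^2\Delta^4 wn')$ and $c_2\eta k\sqrt{2kmv^2+km}/(\delta^{3/2}\alpha\lambda L\Delta^2 wn')$ are produced entirely by the \emph{misclustered} set: IFCA's estimate $\mathbb{E}\left[|S_1\cap\overline{S_1^*}|\right]\le c_1\eta^2 m/(\alpha^2\lambda^2\Delta^4 n')$ (here $\eta^2$ is the variance of the loss values used for cluster selection, Assumption 2, not a gradient-scale quantity), a Markov step, and a union bound over the $k-1$ wrong clusters with $\delta_1=\delta/(4k)$ --- that is where the factors $k$, $\delta^{-3/2}$, and $\Delta^{-4}$ arise. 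Your plan instead derives these terms from the ReLU filter and the normalization $\|g_0^j\|/\|g_i\|$, which does not match the structure of the bound and would not reproduce the correct $\eta^2$ dependence, since in the theorem $\eta$ enters only through the misassignment probability of clients comparing empirical losses.

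More importantly, the step you yourself flag as ``the main obstacle'' --- showing that the ReLU-weighted, normalized aggregation provably annihilates adversarial mass while keeping the denominator $\sum_k \mathrm{ReLU}(c_k^j)$ bounded away from zero --- is precisely what the paper never attempts. It is dissolved by Assumption 5: the cosine-similarity weights $w_i$ are \emph{assumed} to follow $\mathcal{N}(\mu,\sigma^2)$ with $0<\mu<1$ and $\sigma<1$, and $w=\mu m$ is taken for $m$ large. With that assumption the reweighted sum reduces to elementary moment computations, e.g. $\mathbb{E}\left[\|T_{12}\|^2\right] = (v^2\sigma^2+\sigma^2\mu^2+v^2\mu^2)/(n'|S_1\cap S_1^*|) \le (2v^2+1)/(n'|S_1\cap S_1^*|)$, and the contraction $1-\frac{p\lambda}{8L}$ follows from $|S_1\cap S_1^*|\ge \frac{1}{4}pm$ combined with the factor $\hat{\gamma}\mu$ in the strong-convexity step. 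Since your proposal neither invokes such a distributional assumption nor supplies the deferred concentration and denominator lower-bound argument, it has a genuine gap at its hardest point: as written it cannot be completed into the stated bound without either importing Assumption 5 (at which point your extra machinery about ReLU and normalization is unnecessary) or proving from scratch a concentration result for ReLU-of-cosine weights that the paper itself does not establish.
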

This theorem guarantees the convergence of RFCA. Note that for details on assumptions, parameter definitions, theoretical derivations, and the proof, please refer to the appendix.

\section{Performance Evaluation}
In this section, we will evaluate the performance of our proposed EBS-CFL, including both efficiency and Byzantine-robustness. We carry out experiments on MNIST \cite{deng2012mnist}, CIFAR10, and CIFAR100 \cite{Krizhevsky_2009}. Note that more detailed experiments will be presented in the appendix.

\subsection{Efficiency}
In this section, we will evaluate the efficiency of our scheme by contrasting it with solutions that do not incorporate privacy preservation, thereby demonstrating the optimization we have achieved in our privacy-preserving scheme.

We evaluate the average client communication, as shown in Figures~\ref{fig:communication1} and \ref{fig:communication2}. Experiments showed that the communication overhead does not increase with the number of clients, and it has a linear relationship with the size of the data vector. Under normal values of $m$, the communication overhead does not significantly increase with changes in other variables, proving the practicality of the method.

We conducted similar experiments for server-side computational overhead, with the number of cluster identities set to 2, as shown in Figure~\ref{fig:computation}. We measured the efficiency by calculating the consumed time. Experiments showed that the time consumption of our scheme is almost equivalent to FedAvg, corresponding to the theoretical analysis of computational overhead.

\begin{figure}[ht] 
    \centering  
    \includegraphics[width=0.5\textwidth,clip]{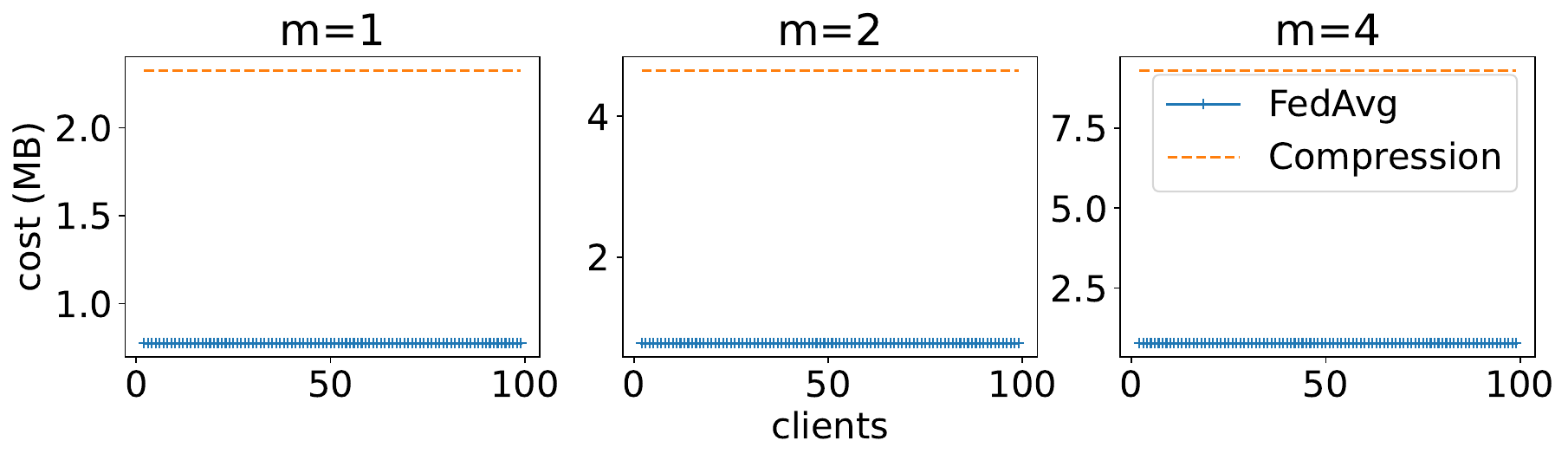}  
    \caption{With a fixed model size of 794KB, the impact of other factors on individual client communication overhead.}
  \label{fig:communication1} 
\end{figure}

\begin{figure}[ht] 
    \centering  
    \includegraphics[width=0.5\textwidth, clip]{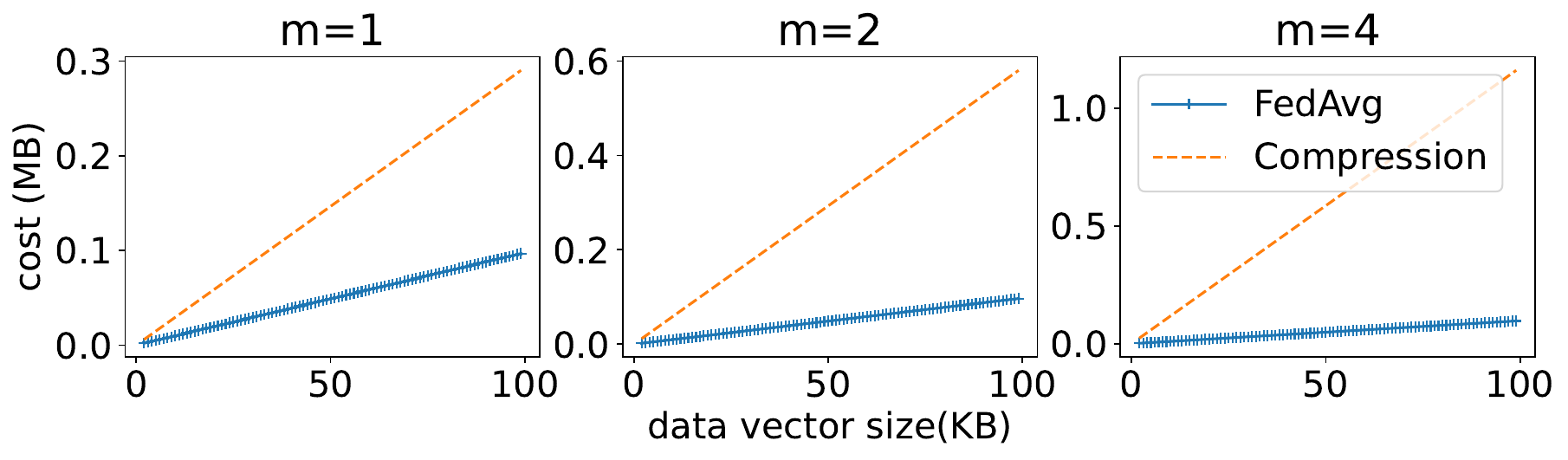}  
    \caption{With a fixed number of clients at 20, the impact of other factors on individual client communication overhead.}
  \label{fig:communication2}  
\end{figure}

\begin{figure}[h]  
  \centering  
  \subfloat[n=5]
    {\includegraphics[width=0.24\textwidth]{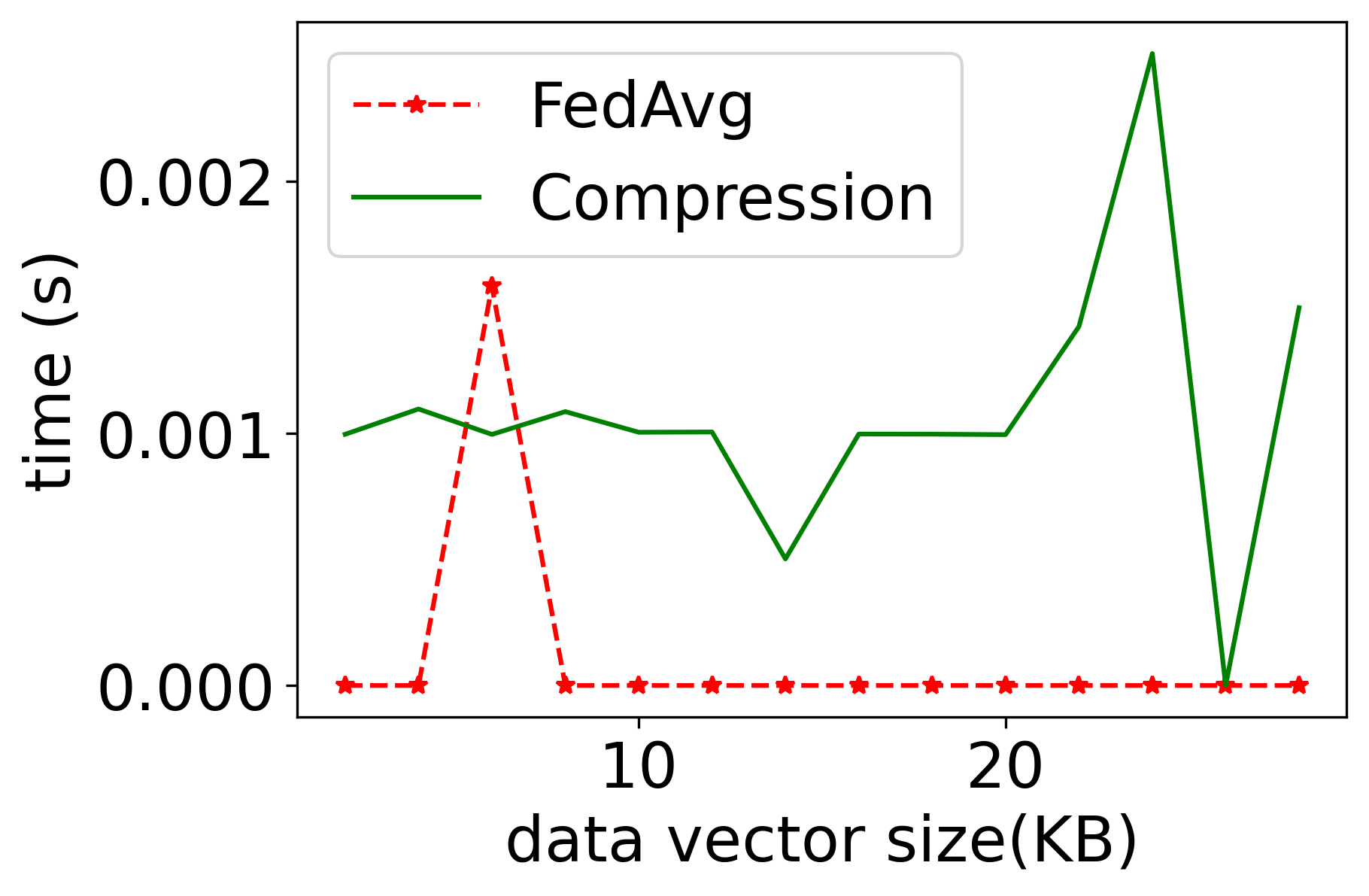}}
    \subfloat[l=10 KB]
    {\includegraphics[width=0.24\textwidth]{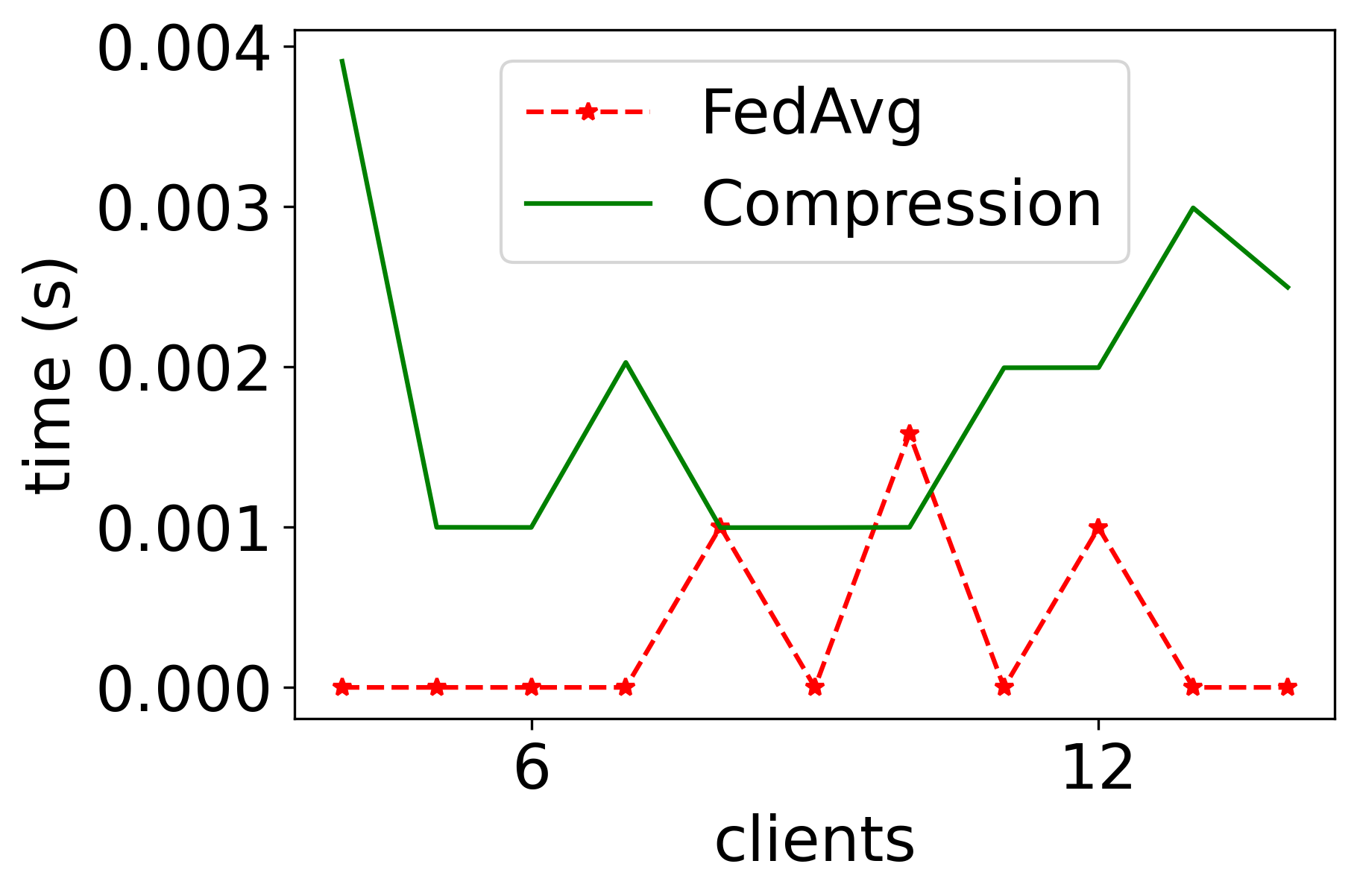}}
  \caption{Study the impact of other variables on computational overhead while fixing the number of clients and model size.}
  \label{fig:computation}  
\end{figure}  

\begin{table*}[ht]
\small
\centering
    \begin{tabular}{ccccc}
      \hline  
      ~ & ~ &  \multicolumn{3}{c}{\textbf{Data Heterogeneity}}\\
      \textbf{Approach} & \textbf{Attack} & \textbf{$\alpha$ = 0.1} & \textbf{$\alpha$ = 0.5} & \textbf{$\alpha$ = 0.9}\\  
      \hline  
      \multirow{2}{*}{FedAvg} & \textbf{LF} & 38.82 / 39.53 / 100.00 / 15.29 & 45.36 / 48.40 / 100.00 / 14.77 & 41.77 / 43.25 / 100.00 / 18.29 \\  
      ~ & \textbf{Sine}  & 12.78 / 40.22 / 100.00 / 26.52 & 14.28 / 43.71 / 100.00 / 28.21 & 43.61 / 48.86 / 100.00 / 15.51 \\
      \hline
      \multirow{2}{*}{FLtrust} & \textbf{LF} & 49.49 / 49.49 / 49.50 / 6.15 & 55.07 / 55.20 / 38.00 / 8.56 & 57.88 / 57.88 / 35.50 / 6.33 \\  
      ~ & \textbf{Sine} & 47.28 / 47.28 / 100.00 / 8.11 & 57.95 / 57.95 / 100.00 / 6.45 & 59.59 / 59.59 / 100.00 / 5.55 \\
      \hline
      \multirow{2}{*}{IFCA} & \textbf{LF} & 48.18 / 49.01 / 100.00 / 15.83 & 37.57 / 42.50 / 100.00 / 20.04 & 49.63 / 52.03 / 100.00 / 13.05 \\  
      ~ & \textbf{Sine} & 12.31 / 25.48 / 100.00 / 36.72 & 10.29 / 20.49 / 100.00 / 38.48 & 41.52 / 42.93 / 100.00 / 19.31 \\
      \hline
      \multirow{2}{*}{EBS-CFL} & \textbf{LF} & 62.13 / 64.58 / 7.00 / 1.73 & 65.29 / 66.30 / 0.00 / 0.25 & 65.88 / 65.90 / 0.50 / 2.10 \\  
      ~ & \textbf{Sine} & 60.56 / 61.95 / 9.00 / 3.38 & 65.45 / 66.17 / 31.50 / 0.74 & 66.74 / 67.09 / 37.00 / 1.85 \\
    \hline  
    \end{tabular} 
\caption{Comparing the performance of different approaches under Byzantine attacks in a non-i.i.d. setting. Each cell in the table includes the final accuracy (FA), maximum accuracy during training (MA), Attack Success Rate (ASR = $\frac{\text{successful attackers}}{\text{total attackers}}$), and Attack Impact Rate (AIR = $\frac{2 \cdot \text{NA} - \text{FA} - \text{MA}}{2 \cdot \text{NA}}$), separated by ``/''. NA represents the accuracy achieved during training without any attacks. For the sake of convenience in presentation, the percentage symbols in the table are omitted.}
\label{tab:data_heterogeneous_byzantine_attacks}
\end{table*}
\subsection{Byzantine-robustness}
In this section, we evaluate the Byzantine robustness of our scheme. To demonstrate the effectiveness of the proposed EBS-CFL, we compare it against FLTrust, a state-of-the-art Byzantine-robustness algorithm. For attacks, we consider the representative Label-Flipping (LF) Attack \cite{lf_attack} and the advanced Sine Attack \cite{Sine}. Additional comparisons with other methods, such as those in \cite{Median, Krum, PoisoningAttacks}, are detailed in the appendix.

To simulate data heterogeneity under a non-i.i.d. setting, we use Dirichlet's distribution \cite{MNIID}, where $\alpha$ controls data dispersion (smaller $\alpha$ indicates higher heterogeneity). We evaluate both attacks at a high adversarial rate of 40\%. The Sine Attack leverages cosine similarity to generate gradients that align closely with the central benign gradient but diverge from other client gradients. Our scheme protects server updates, rendering the AK-BSU assumption of Sine ineffective. Therefore, we conduct experiments under the realistic AK assumption and PC capability defined in Sine Attack. Results are shown in Table~\ref{tab:data_heterogeneous_byzantine_attacks}.
Greater data heterogeneity reduces training accuracy and increases attack success rates. However, since heterogeneity inherently affects training accuracy, the relative impact of attackers does not rise significantly. Malicious gradients introduced by attackers are classified as low-quality and assigned minimal weight, reducing their influence. 
FLTrust exhibits weakened defense under heterogeneous data. Against Sine Attack, FLTrust can be fully compromised as data heterogeneity increases, making it harder to distinguish malicious gradients from legitimate but highly divergent ones. LF Attack follows a similar trend but achieves a lower success rate than Sine Attack, despite causing a greater overall impact.
In contrast, our scheme significantly improves defense by effectively clustering and identifying malicious gradients in heterogeneous data. While accuracy declines when $\alpha = 0.1$, this is primarily due to the extreme heterogeneity rather than interference from Byzantine adversaries.

\section{Security Analysis}

In the security proof, we define the perturbation functions $\phi = x \cdot M$ and $\psi(x, R) = x + R$, which safeguard the privacy of matrix and vector operations, respectively. 
We also define the function $\varphi(\mathcal{S}) = \sum_{M \in \mathcal{S}} M$, where $\mathcal{S} \subset \{M_i\}_{i=1}^n$ and $\{M_i\}_{i=1}^n$ are mutually orthogonal matrices. The sum of these orthogonal matrices does not reveal information about individual matrices.

Using this framework, via theorems we prove that specific encryption mechanisms, such as VOMCA and SRFC, ensure the confidentiality and immutability of individual variables due to the incorporation of randomness. In secure aggregation and compression schemes, multiple randomizations and key transformations protect gradients and cluster identities from being leaked. Additionally, the SKT mechanism guarantees that neither keys nor variables can be inferred during transformations.
Note that for detailed illustration of the proof process, please refer to the appendix.

\section{Conclusion}
In this paper, we have proposed an efficient and robust clustered federated learning secure aggregation framework, EBS-CFL. Specifically, we construct an encoding mechanism using matrix techniques, allowing the server to filter out poisonous gradients without knowing the clients' gradients and cluster identities. We then perform weighted aggregation based on the correlation between the gradients and the server updates. In addition, we provide detailed theoretical proofs of the correctness and the security of the approach, and analyze its efficiency. Finally, through extensive experiments, we demonstrate the efficiency, effectiveness, and robustness of our approach.
% 在本文中，我们提出了一个高效具有鲁棒性的聚类联邦学习安全聚合框架 EBS-CFL。具体来说，我们通过矩阵技术构造了一种编码方法，使得服务器可以在不知道客户端的梯度和聚类身份的情况下，过滤有毒梯度，并根据梯度和服务器更新的相关性进行加权聚合。我们通过详细的理论证明了方案的正确性和安全性，并对效率进行了分析。最后通过大量实验，证明了我们方案的高效性，有效性和鲁棒性。

\section*{Acknowledgements}
The authors would like to express their gratitude to Haiyong Bao for his insightful discussions and guidance. We also thank Cheng Huang and Hong-Ning Dai for their productive and stimulating conversations. Finally, our heartfelt thanks go to Menghong Guan and Hao Pan for their invaluable support and assistance.

The authors are very grateful as this work was supported in part by the National Natural Science Foundation of China (62072404); and in part by the Natural Science Foundation of Shanghai Municipality (23ZR1417700). 

\bibliography{aaai25}

\appendix
\section{Supplementary Experiments}
This section will provide detailed supplementary information to demonstrate the effectiveness of our proposed EBS-CFL.

\subsection{Model Accuracy}
To verify the compatibility of our approach with CFL, we conducted experiments to assess model convergence. We trained the model on two datasets, i.e., MNIST and CIFAR10, with various numbers of clients and recorded performance on the test set during each round. We simulated uneven data distribution by allocating data based on labels, ensuring that individual clients do not possess data with all labels. As shown in Figure~\ref{fig:model_accuracy}, our approach maintains accuracy compared to IFCA and performs better than FedAvg, demonstrating its ability to handle uneven data distributions and outperform traditional models.

Our method, which employs the weighted aggregation of cosine similarities derived from server updates and incorporates gradient scaling, has its model convergence performance intrinsically tied to the selection of the server update, denoted as $g_0$. In our approach, $g_0$ and subsequent model training should use the same model. If the model is trained from random initialization, the obtained gradients are likely to be classified as poisonous. This is because the initial model determines the starting position, and different starting positions result in different directions toward the approximate optimal solution.
Another factor is the weighting and scaling, as the learning rate significantly impacts convergence performance, and both weighting and scaling are affected by $\|g_0\|$. Our experiments showed that adjusting the training process of $g_0$ can improve overall performance. Additionally, multiplying by the $\|g_0\|$ factor may not always benefit the overall model. Essentially, the role of $\|g_0\|$ in Eq.\eqref{eq:rfca_aggregation} is to scale and counteract scaling attacks. However, in practice, $\|g_i\|$ already achieves this effect, so the choice of scaling $\|g_0\|$ is based on practical considerations or decreased to the learning rate.
Therefore, choosing the learning rate is crucial. In our experiments, to standardize the variable, we used dynamic learning rates to eliminate the influence of scaling, making it consistent with other approaches.

\begin{figure}[ht]  
  \centering  
  \subfloat[MNIST (n=10)]
    {\includegraphics[width=0.25\textwidth]{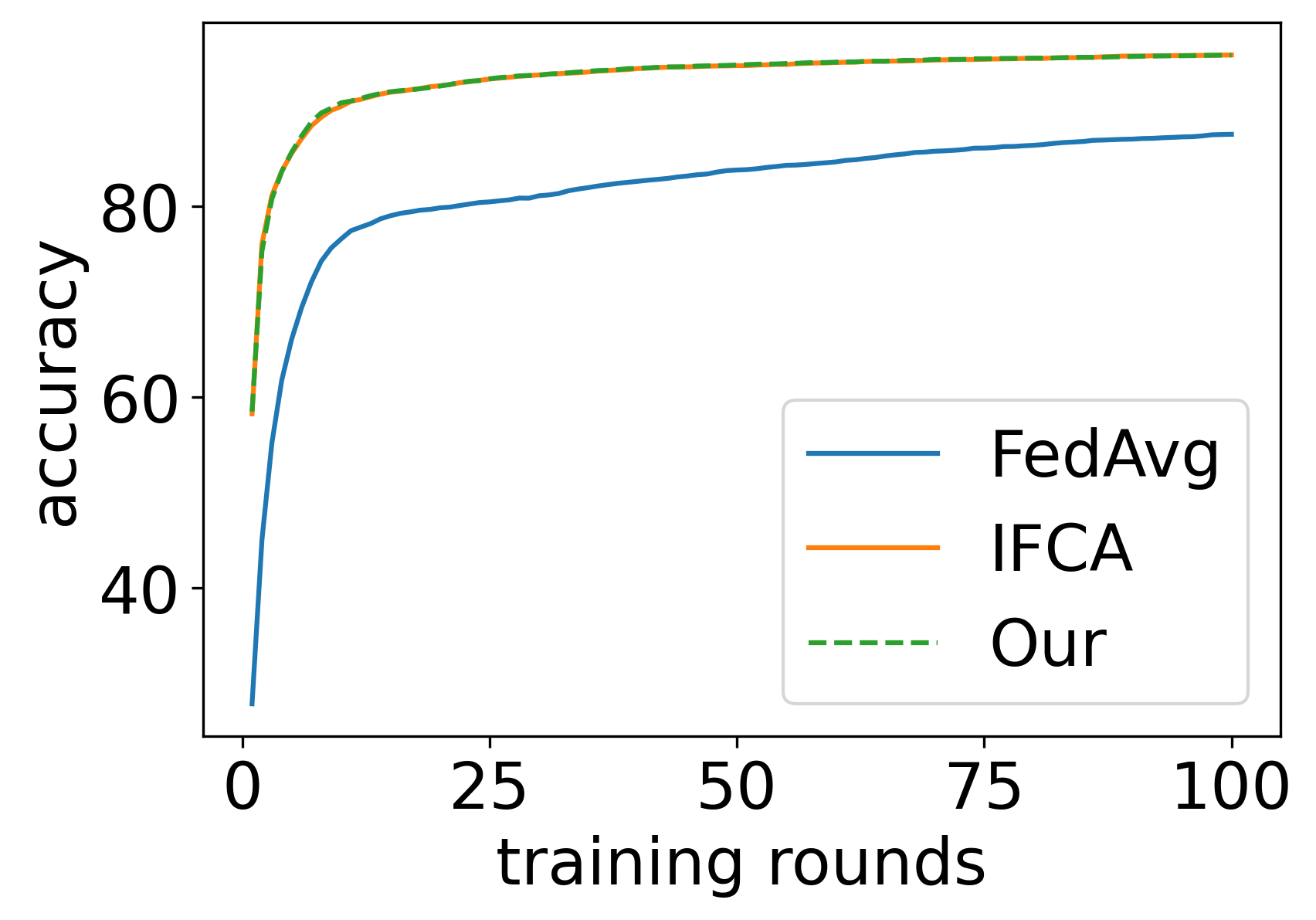}}
    \subfloat[CIFAR10 (n=10)]
    {\includegraphics[width=0.25\textwidth]{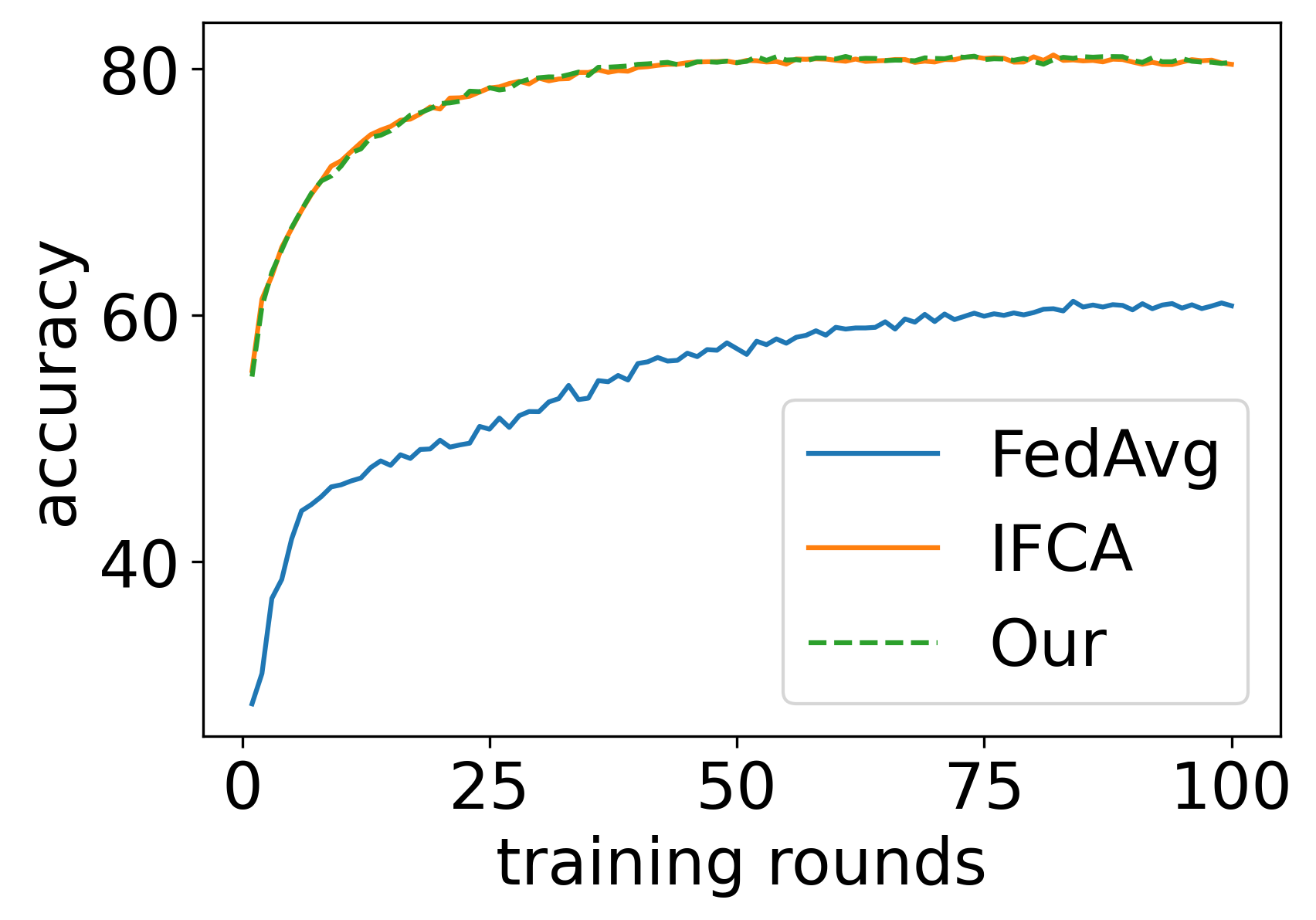}}
      
    \subfloat[MNIST (n=20)]
    {\includegraphics[width=0.25\textwidth]{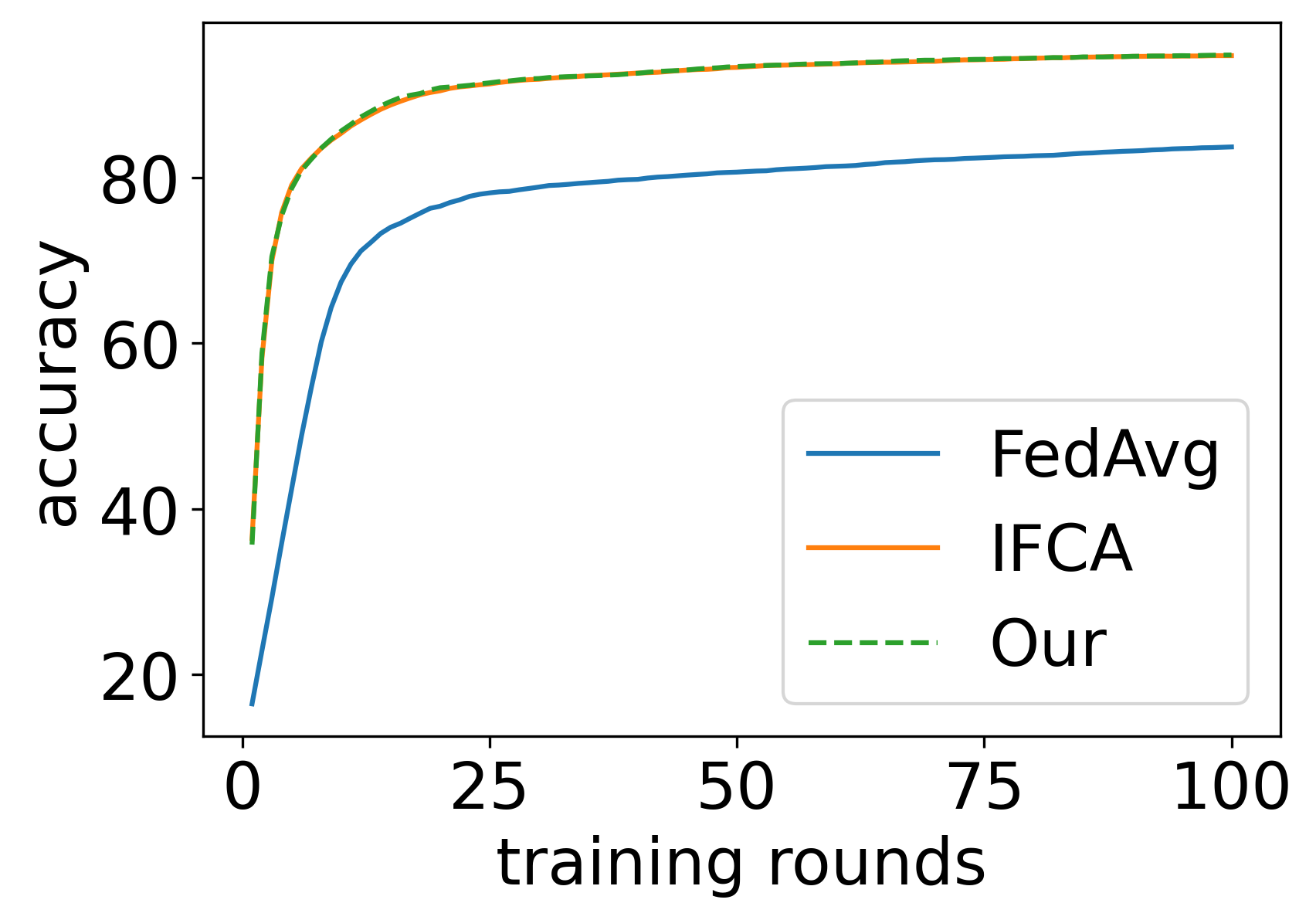}}
    \subfloat[CIFAR10 (n=20)]
    {\includegraphics[width=0.25\textwidth]{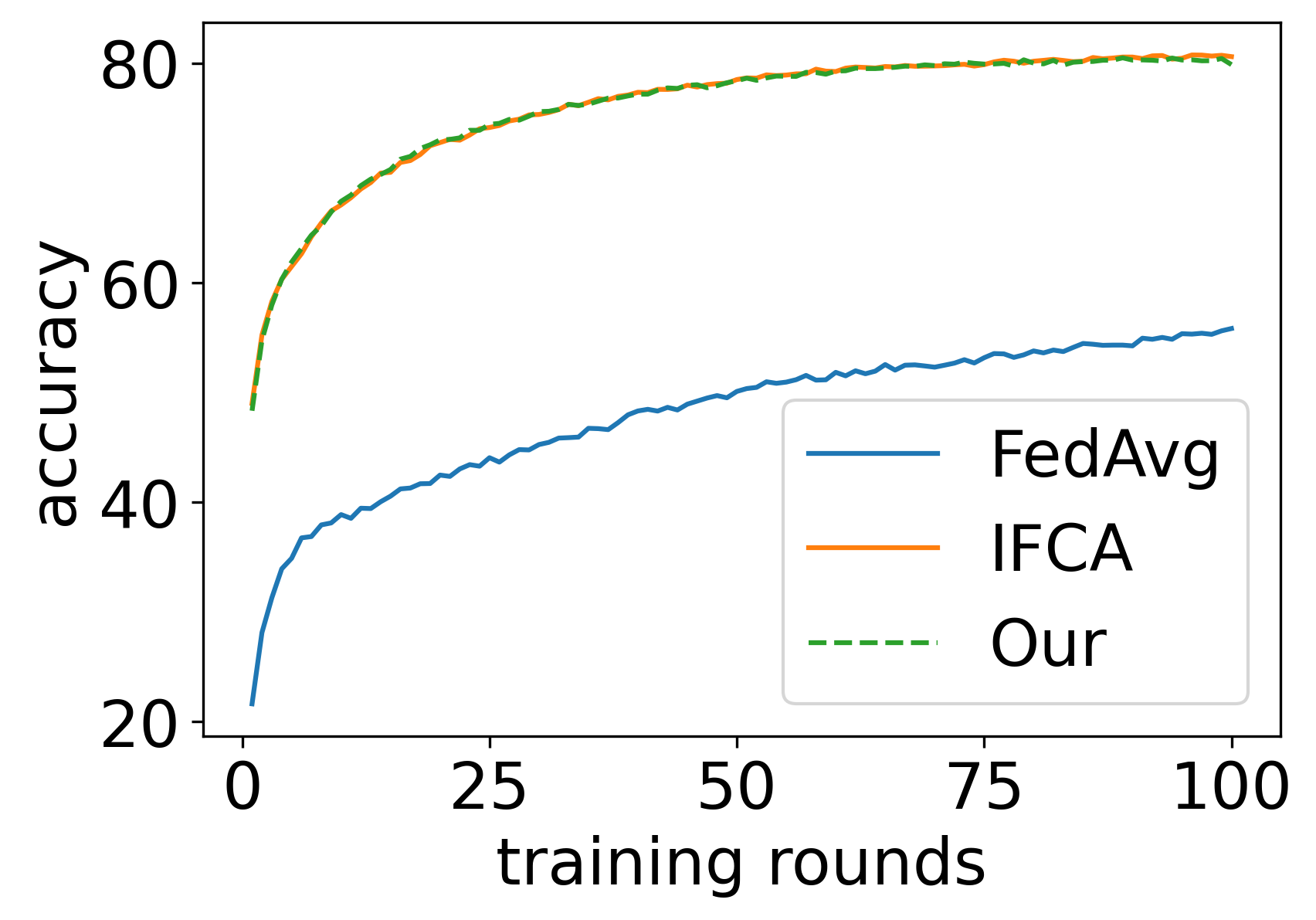}}
  \caption{The performance of the model on different datasets with varying numbers of clients.}
  \label{fig:model_accuracy}  
\end{figure} 
\subsection{Byzantine-robustness}
We selected representative and recent Byzantine attacks for evaluation, including Label-Flipping (LF) Attack, Krum Attack, Trim Attack, Scaling Attack, and Adaptive Attack.

Under the independent and identically distributed (i.i.d.) data setting, we compared our scheme with other representative schemes. The experimental results are shown in Table~\ref{tab:byzantine_attacks}. It is obvious that, our scheme maintains high accuracy under various attacks, showing a clear advantage over other schemes. In the experiments with the MNIST dataset, our results are similar to those of FLTrust. However, for the CIFAR10 dataset, our scheme achieves higher accuracy than FLTrust. This is because our scheme incorporates the features of IFCA, performing better than FedAvg under equivalent conditions. Since CIFAR10 is more complex than MNIST, this difference is more noticeable.

\begin{table*}[ht]
\small
\centering
    \begin{tabular}{ccccccccccc}
      \hline  
      ~ & \multicolumn{5}{c}{\textbf{MNIST}}  &  \multicolumn{5}{c}{\textbf{CIFAR10}}\\
      \textbf{Approach} & \textbf{LF} & \textbf{Krum} & \textbf{Trim} & \textbf{Scaling} & \textbf{Adaptive} & \textbf{LF} & \textbf{Krum} & \textbf{Trim}  &  \textbf{Scaling} & \textbf{Adaptive}\\  
      \hline  
      FedAvg & 92.21\% & 97.31\% & 37.54\% & 43.79\% & 96.84\% & 71.31\% & 70.93\% & 18.72\% & 9.86\% & 12.38\% \\  
      Median & 96.47\% & 97.25\% & 57.84\% & 97.06\% & 52.22\% & 68.54\% & 36.02\% & 24.67\% & 70.42\% & 17.33\% \\  
      Krum & 65.12\% & 5.32\% & 77.16\% & 67.55\% & 73.07\% & 46.63\% & 11.37\% & 40.68\% & 51.77\% & 40.92\% \\  
      FLTrust & 96.82\% & 98.31\% & 97.30\% & 96.52\% & 97.76\% & 64.29\% & 65.79\% & 66.17\% & 65.34\% & 66.23\% \\
      EBS-CFL & 97.82\% & 98.79\% & 97.21\% & 97.96\% & 98.43\% & 72.19\% & 71.76\% & \textbf{72.84\%} & 72.52\% & \textbf{71.46\%} \\  
      \hline  
    \end{tabular} 
\caption{Comparing the accuracy of different approaches under Byzantine attacks. LF attack, Krum attack, Trim attack, Scaling attack, and Adaptive attack follow the settings in research \cite{FLTrust}.}
\label{tab:byzantine_attacks}
\end{table*}

\begin{figure*}[ht] 
    \centering  
    \includegraphics[width=1\textwidth, clip]{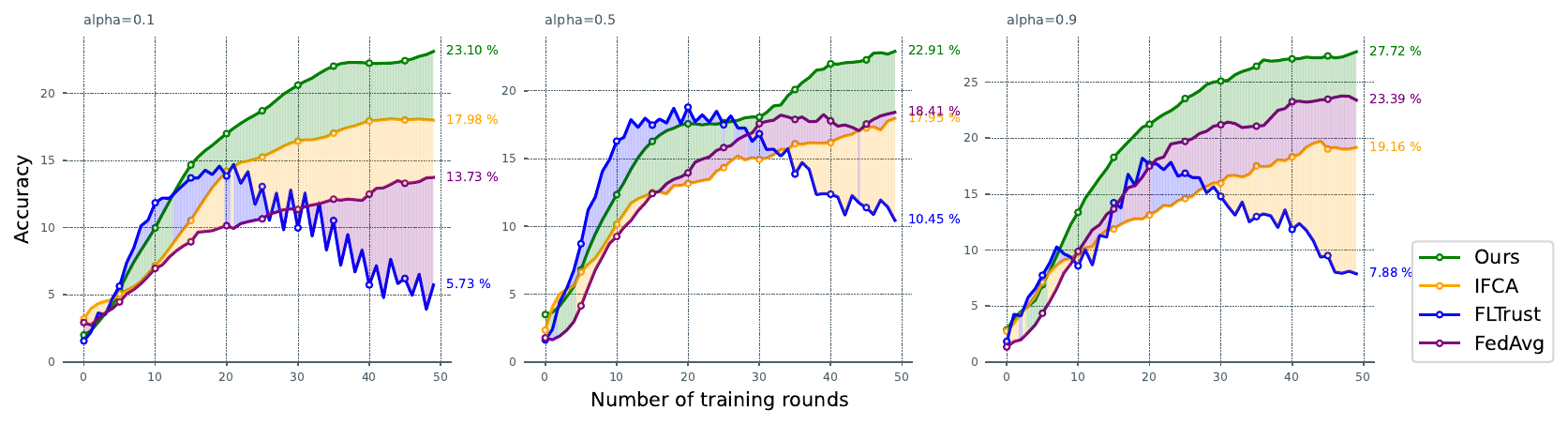}  
    \caption{Comparing the performance of different methods under LF attacks in a non-i.i.d. setting on the CIFAR100 dataset.}  
    \label{fig:CIFAR100_LF}
\end{figure*}

Under the non-i.i.d. data setting, we conducted additional experiments on the CIFAR100 dataset, keeping all experimental settings consistent.

As shown in Figure~\ref{fig:CIFAR100_LF}, our method consistently outperforms the compared methods across different $\alpha$ settings. Additionally, it can be observed that FLTrust exhibits an initial increase followed by a decrease in performance in all experiments. This occurs because, in non-i.i.d and more complex datasets, it becomes increasingly difficult to distinguish between the adversarial poisoned gradients and the benign gradients submitted by clients. Meanwhile, the adversary's influence on the model gradually increases. FLTrust updates the server with the newly updated global model each round and uses this model as a basis to evaluate the gradients. This means that in cases where the global model is biased, the poisoned gradients may receive a higher aggregation weight, ultimately reducing model accuracy. Therefore, FLTrust may have an advantage early when the global model is less affected, but as the global model shifts, the algorithm inadvertently amplifies the LF attack. In contrast, our proposed EBS-CFL generates server updates during the initialization phase, making its gradient evaluation independent of the current global model, and thus, its accuracy curve does not exhibit the rise-and-fall phenomenon.

Furthermore, it is obvious that IFCA outperforms FedAvg when $\alpha=0.1$. However, when $\alpha \le 0.5$, FedAvg's accuracy exceeds IFCA's. This pattern emerges due to the fact that at lower $\alpha$ levels, the non-i.i.d effects significantly impede FedAvg. As $\alpha$ rises, the impact of non-i.i.d diminishes. Meanwhile, the LF attack exerts a greater influence on IFCA than FedAvg, resulting in its inferior performance compared to FedAvg. Our proposed EBS-CFL combines the strengths of both methods and includes Byzantine-robustness, resulting in superior performance in all experiments.

\section{Efficiency Analysis}

We analyze the communication and computation efficiency of the compression scheme for each round, excluding the overhead of initialization, which is performed only once at the beginning.
Analysis results are provided in Table~\ref{tab:communication_computation}.

\subsection{Communication Overhead}
We analyze the communication overhead from the perspectives of both the client and the server.

\textbf{Client:} The communication complexity for the client involves two components, i.e., the key sent by the Trusted Authority (KDC) and the encoded gradients sent to the server. For the $i$-th client, the key $ek_i = \{ek_i^0, ek_i^1\}$, where $ek_i^0 \in \mathbb{R}^{tm \times 3tm\xi_1}$ and $ek_i^1 \in \mathbb{R}^{s \times 3tm\xi_1}$. The encoded gradient is denoted as $\delta_i \in \mathbb{R}^{s \times 3tm\xi_1}$. Here, $\xi_1$ and $t$ are constants, and $s$ can be expressed as $\frac{l}{t}$, resulting in a communication complexity for the client is $O(m^2 + ml)$.

\textbf{Server:} The server's communication complexity includes the verification parameters sent by the KDC, $\{\mathcal{V}_i \cdot \sum_{j=1}^m {R''_{ij}}^T\}_{i=1}^n$, and the encoded gradients $\delta_i$ sent by the clients. For $1 \le i \le n$, $\mathcal{V}_i \cdot \sum_{j=1}^m {R''_{ij}}^T \in \mathbb{R}$. With $n$ clients, the server's communication complexity is $O(nml + n)$.

\subsection{Computation Overhead}
We analyze the computation overhead from both perspectives of client and server .

\textbf{Client:} The client's computation involves encoding gradients. For the $i$-th client, $1 \le j \le s$, the gradient fragment $g_{i,j} \in \mathbb{R}^{1 \times t}$ is encoded as $E(g_{i,j}) = g_{i,j}ek_i^0 + ek_{i,j}^1$, with a computational complexity of $O(m^2l)$.

\textbf{Server:} The server's computation includes gradient verification, SRFC execution, and hierarchical aggregation.

{\it Gradient Verification:} This involves Eq.\eqref{eq:verify} and 
$\forall i \in [n], \; ||\delta_i||^2 = 3$, both with a complexity of $O(nl)$.

{\it SRFC Computation:} Before running SRFC, for $1 \le i \le n$, $1 \le j \le 2tm\xi_1$, and $1 \le k \le s$, we first compute $\alpha_i$, where $\alpha'_{j,k} \in \mathbb{R}^{3tm\xi_1 \times 2tm\xi_1}$. The combined computational complexity of calculating $\alpha_i$ and the inner product between clients' gradients and server updates is $O(nm^3l)$. SRFC is computed based on Eq.\eqref{eq:ReLU_encode} and Eq.\eqref{eq:ReLU_decode}. Here, $\alpha_i \in \mathbb{R}^{2tm\xi_1 \times 2tm\xi_1}$, $\beta \in \mathbb{R}^{tm \times 2tm\xi_1}$, and $\tau_1, \tau_2, \tau_3 \in \mathbb{R}^{2tm\xi_1 \times 2tm\xi_1}$. The computational complexity of calculating $\beta \cdot (\sum_{i=1}^n \alpha_i)$ is $O(nm^2)$. The computational complexity of calculating     
$\sum_{i=1}^n \mathcal{F}^{-1}(\beta \cdot \alpha_i^2 \cdot \tau_2 + \mathcal{F}^{-1}(\beta \cdot \alpha_i^2 \cdot \tau_3))$ is $O(nm^4)$.

{\it Layered Aggregation:} For hierarchical aggregation and transformation of $\delta$, we set $\xi_1 = 1$ and for $2 \le i \le x$, $\xi_2 = 2$. The transformation key is $tk_{i-1} \in \mathbb{R}^{3tm\xi_{i-1} \times 3tm\xi_i}$, with a computational complexity of $O(m^2nl \log{n})$. For the transformation of $\alpha$, we set $x = 2$, $\xi_1 = 1$, $\xi_2 = n$. The transformation key is $tk_i \in \mathbb{R}^{3tm\xi_1 \times 3tm\xi_2}$, with a computational complexity of $O(m^2n^2)$.

In summary, the server's computational complexity is $O(nm^3l + nm^4 + m^2n^2 + m^2nl\log{n})$.

\section{Pseudo Code of RFCA}
This section provides a detailed description of the pseudo code for RFCA as Algorithm~\ref{alg:rfca}. The RFCA is designed to enhance the Byzantine-robustness of federated learning in non-i.i.d. environments. It begins by initializing cluster-specific models on the server using a root dataset, which is partitioned to reflect real-world data distributions. During each global iteration, the server broadcasts the global models to all clients. Clients then perform local updates based on their data, determining the most suitable cluster for their model. They send their updated gradients and cluster information back to the server. The server aggregates these gradients, adjusting the cluster models using a weighted scheme based on the alignment of client gradients with the initial server updates, reinforcing the Byzantine-robustness against adversarial attacks.

\begin{algorithm}[tb]
\caption{Robust Federated Clustering Learning Algorithm (RFCA)}  
\label{alg:rfca}
\small
\begin{algorithmic}[1]  
\Statex \textbf{Input:} number of clusters $m$; learning rate $\eta$; learning rate of initialization $\eta_I$; initialization $\{\theta^0_i\}_{j=1}^m$; $n$ clients with local training datasets $D_i$, for $1 \le i \le n$; a server with root dataset $D_0$; number of global iterations $T_g$; number of server iterations $T_s$; number of local iterations $T_l$; batch size $b$.
    \Statex \underline{Server initialization: }  
    \State Split $D_0$ into $\{D_0^i\}_{i=1}^n$ according to real-world data distribution.
    \For{$j=1$ to $m$}
        \State $g_0^j \gets 0$
    \EndFor
    \For{$i=1$ to $m$}
        \State $j, \hat{g} \gets ClusteredModelUpdate(\{\theta^0_i\}_i, D_0, b, \eta_I, T_s)$
        \State $g_0^j \gets g_0^j + \frac{m}{n}\hat{g}$
    \EndFor
    
    \For{$t=1$ to $T_g$}
        \Statex \underline{Server: } 
        
        The server broadcast $\{\theta^0_i\}_{j=1}^m$ to clients
        \Statex \underline{Client: } 
        \For{$i=1$ to $n$}
            \State $\hat{j}, g_i \gets ClusteredModelUpdate(\{\theta^0_i\}_{j=1}^m, D^i, b, \eta, T_l)$
            \State define one-hot encoding vector $s_i \gets \{s_{i,j}\}_{j=1}^m$ ($s_{i,j} = 1$ where $j = \hat{j}$)
            \State send $\{s_{i,j}\}_{j=1}^m, g_i$ to the server.
        \EndFor
        \Statex \underline{Server: }  
        \For{$i=1$ to $n$}
            \For{$j=1$ to $m$}
                \State $c^j_i \gets s_{i,j} ReLU(\frac{\langle g_i, g_0^j \rangle}{\lVert g_i \rVert \cdot \lVert g_0^j \rVert})$
            \EndFor
        \EndFor
        \State $c_{agg} \gets \sum_{i=1}^{n}\sum_{j=1}^{m} \text{ReLU}(c^j_i)$
        \For{$j=1$ to $m$}
            \State $\theta^{t}_j \gets \theta^{t-1}_j + \eta \frac{1}{c_{agg}} \sum_{i=1}^n \text{ReLU}(c^j_i) \frac{\lVert g_0^j \rVert}{\lVert g_i \rVert} g_i$
        \EndFor
    \EndFor  
    \State \Return $\{\theta^T_i\}_{i=1}^m$
\end{algorithmic}  
\end{algorithm}

\section{Proof}
\subsection{Proof of Theorem \ref{th:verify}}
\label{pf:verify}
We define $\{x_i\}_{i=1}^n$ which satisfy that $\sum_{i \in n}x_i = \sum_{i \in n}\chi_i \sum_{i \in n}M_i^T$, so we have
\begin{equation*}
    \sum_{i=1}^n a_i \cdot dk^T \ne \sum_{i=1}^n x_i \iff \sum_{i \in n} R'_i \ne \sum_{i \in n} R_i.
\end{equation*}
And we have
\begin{equation*}
    \begin{aligned}
        &\sum_{i=1}^n a_i \cdot dk^T \ne \sum_{i=1}^n x_i \wedge \forall i \in [n], \mathcal{V}(\chi_i, vk)=1 \iff \\ 
        &\sum_{i \in n} R'_i \ne \sum_{i \in n}R_i \wedge \forall i \in [n], \mathcal{V}(\chi_i, vk)=1 \Rightarrow \\ 
        & (\exists i \in [n])(R'_i \ne R_i \wedge \mathcal{V}(\chi_i, vk)=1).
    \end{aligned}
\end{equation*}
When $a_i = \chi_i'$, let $a_i = \sum_{j \in S} x_j'M_j + \sum_{j \in S} R'_jM_{j+k}$. 

So the left term of Eq.~\ref{eq:p_verify} can derive that
\begin{equation*}
    \begin{aligned}
        & Pr[\sum_{i \in n} R'_i \ne \sum_{i \in n}R_i \wedge \forall i \in [n], \mathcal{V}(\chi_i, vk)=1] \\
        \le & Pr[(\exists i \in [n])(R'_i \ne R_i \wedge \mathcal{V}(\chi_i, vk)=1)] \\
        =&Pr[\mathcal{V}_iR^T=0] 
        = \frac{1}{\lambda^l} = \epsilon,
    \end{aligned}
\end{equation*}

where $R$ is the attacker's random guessing matrix.
\subsection{Proof of Theorem \ref{th:srfc}}
\label{pf:srfc}
The terms of Eq.\eqref{eq:ReLU_encode} can be expressed as
\begin{equation}
\label{eq:t_1}
    \begin{aligned}
        &\beta(\sum_{i=1}^n \alpha_i)\tau_1
        =\beta \sum_{i=1}^n x_iM_i^TM_i + M_{i+n}^T(R_i^{(0)} + \zeta_i \cdot M_{i+n})\\
        &= \sum_{i=1}^n x_iM_i + R_i^{(0)} + \zeta_i \cdot M_{i+n},
    \end{aligned}
\end{equation}
and
\begin{equation}
\label{eq:t_2}
    \begin{aligned}
        &\mathcal{F}^{-1}(\beta\alpha_i^2\tau_3)
        = \mathcal{F}^{-1}(x_i^2M_i^T\mathcal{F}(2M_i \circ R_i^{(1)})
        - M_{i+n}^T\mathcal{F}(R_i^{(2)})) \\
        & = 2x_iM_i \circ R_i^{(1)} - R_i^{(2)}.
    \end{aligned}
\end{equation}
According to \eqref{eq:t_2}, we have
\begin{equation}
\label{eq:t_3}
    \begin{aligned}
        & \mathcal{F}^{-1}(\beta \alpha_i^2 \tau_2 + \mathcal{F}^{-1}(\beta \alpha_i^2 \tau_3))
        = \mathcal{F}^{-1}(x_i^2 (M_i \circ M_i) \\
        &- R_i^{(1)} \cdot R_i^{(1)} + R_i^{(2)} + \mathcal{F}^{-1}(\beta\alpha_i^2\tau_3))\\
        &= \mathcal{F}^{-1}(x_i^2 (M_i \circ M_i) - R_i^{(1)} \cdot R_i^{(1)} + 2x_iM_i \circ R_i^{(1)})\\
        &= |x_i|M_i + R_i^{(1)}.
    \end{aligned}
\end{equation}
According to \eqref{eq:t_1} and \eqref{eq:t_3}, the right term of Eq.\eqref{eq:ReLU_encode} can be expressed as
\begin{equation}
    \begin{aligned}
        &\frac{1}{2}(\sum_{i=1}^n x_iM_i + R_i^{(0)} + \zeta_i \cdot M_{i+n} + |x_i|M_i + R_i^{(1)})\\
        &= \sum_{i=1}^n ReLU(x_i)M_i + \frac{1}{2} \zeta_i \cdot M_{i+n} = \sum_{i=1}^n E(ReLU(x_i)).
    \end{aligned}
\end{equation}
So the Eq.\eqref{eq:ReLU_encode} is true. And for Eq.\eqref{eq:ReLU_decode}, we have
\begin{equation}
    \begin{aligned}
        & \frac{1}{l}tr(\sum_{i=1}^n E(ReLU(x_i) \cdot \beta^T) = \frac{1}{l}tr(\sum_{i=1}^n ReLU(x_i)I + \frac{1}{2} \zeta_i)\\
        &= \frac{1}{l}\cdot l \cdot \sum_{i=1}^n ReLU(x_i) = \sum_{i=1}^n ReLU(x_i).
    \end{aligned}
\end{equation}
\section{Security Analysis}
\begin{definition}
    The perturbation function $\phi$ is defined as follows.
    \begin{equation}
        \phi(x, M) = x \cdot M,
    \end{equation}
   where $M \cdot M^T = I$, and it satisfies
   \begin{equation*}
        Pr(x'=x|\phi(x, M)) = Pr(M'|M' \cdot M^T = I),
    \end{equation*}
    where $M'$ presents random guess Matrix.
\end{definition}

\begin{definition}
    The perturbation function $\psi$ is defined as follows.
    \begin{equation}
        \psi(x, R) = x + R,
    \end{equation}
    where $x \cdot R^T \ne O$ and $R^T \cdot x \ne O$, it satisfies
    \begin{equation*}
        Pr(x'=x|\psi(x, R)) = Pr(R' = R).
    \end{equation*}
    where $R'$ presents random guess Matrix.
\end{definition}

\begin{definition}
    The function $\varphi$ is defined as follows.
    \begin{equation}
        \varphi(\mathcal{S}) = \sum_{M \in \mathcal{S}} M,
    \end{equation}
   where $\mathcal{S} \subset \{M_i\}_{i=1}^n$, and $\{M_i\}_{i=1}^n$ are Mutually Orthogonal Matrices.
\end{definition}

\begin{theorem}
\label{th:sum_mo}
    The function $\varphi$ satisfies that
    \begin{equation}
    \label{eq:sum_mo_ineq}
        \text{Pr}[M' \in S|\varphi(\mathcal{S})] \le \text{negl}(n).
    \end{equation}
\end{theorem}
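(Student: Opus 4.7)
The plan is to argue that the sum $\varphi(\mathcal{S})$ leaves an (almost) uniform distribution over a super-polynomially large set of mutually orthogonal tuples consistent with it, so that any adversarial guess $M'$ has vanishing probability of lying in $\mathcal{S}$.

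First I would set up the sample space. Let $\mathcal{M}_n$ denote the set of all ordered tuples $(M_1,\dots,M_n)$ of mutually orthogonal matrices of the prescribed dimensions. Since the $M_i$ form (pieces of) an orthonormal basis of a large Euclidean space, a parameter-counting (Stiefel manifold) argument shows that the discretization of $\mathcal{M}_n$ used by the KDC has cardinality growing super-polynomially in $n$, i.e., $|\mathcal{M}_n|^{-1} = \text{negl}(n)$. The KDC samples uniformly from this space, so before conditioning the distribution is uniform on $\mathcal{M}_n$.

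Next I would analyze the conditional distribution. Fix an observation $C = \varphi(\mathcal{S})$. The key step is to exhibit a rich group of symmetries that preserve $C$: for any two indices $i,j\in\mathcal{S}$, a simultaneous orthogonal rotation within the subspace spanned by the rows of $M_i$ and $M_j$ modifies the individual summands while preserving both their sum and the mutual orthogonality constraints with all other $M_k$. The orbit of the uniform distribution under this group still has super-polynomial size, so the posterior $\Pr[(M_1,\dots,M_n)\mid \varphi(\mathcal{S})=C]$ is supported on a set of cardinality at least $|\mathcal{M}_n|/\mathrm{poly}(n)$.

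For the final bound I would count. For any fixed guess $M'$, the event $M'\in\mathcal{S}$ forces one of the $|\mathcal{S}|\le n$ coordinates of the tuple to equal $M'$; since $M'$ is a single point, the number of consistent configurations drops by a factor of at least $|\mathcal{M}_n|/\mathrm{poly}(n)$. Dividing, we obtain
\begin{equation*}
\Pr[M'\in\mathcal{S}\mid \varphi(\mathcal{S})=C]\;\le\; \frac{n\cdot \mathrm{poly}(n)}{|\mathcal{M}_n|} \;=\; \mathrm{negl}(n),
\end{equation*}
which yields \eqref{eq:sum_mo_ineq}. The main obstacle is Step~2: making the symmetry/orbit argument precise, since fixing one $M_i = M'$ restricts the remaining factors via the mutual orthogonality constraint. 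I would handle this by working in the quotient of the Stiefel manifold by the stabilizer of $M'$, showing this quotient still has super-polynomial cardinality and so the counting argument goes through uniformly in the adversary's guess.
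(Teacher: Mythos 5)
Your proposal takes essentially the same route as the paper: the paper's proof also rests on exhibiting a continuum of sum-preserving symmetries --- it rotates the row vectors of the matrices in $\mathcal{S}$ about the axis $\vec{a_i} = \sum_{M \in S} M_i$ formed by each row-sum, producing infinitely many distinct mutually orthogonal sets $S' \neq S$ with $\varphi(\mathcal{S}') = \varphi(\mathcal{S})$, and concludes that the decomposition is non-identifiable, so the adversary's guess succeeds with negligible probability. Your Stiefel-manifold discretization, posterior-orbit counting, and quotient-by-stabilizer step are a more formal scaffolding around this identical symmetry idea (the paper stops at the non-uniqueness of the decomposition over the reals), so the two arguments do not differ in substance.
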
  

\begin{proof} 
    Due to the summation of all elements in $S$ by $\varphi$, it is impossible to obtain individual elements in $S$.

    Considering whether there exists $S'$, different from $S$, such that the elements in $S'$ form pairwise orthogonal matrices, and $|S| = |S'|$, we have

    \begin{equation}
    \label{eq:sum_mo}
        \varphi(\mathcal{S}) = \varphi(\mathcal{S'}).
    \end{equation}
    Taking a three-dimensional orthogonal matrix as an example, a $3$-dimensional orthogonal matrix can be divided into a set of pairwise orthogonal matrices $\{M_i\}_{i=1}^3$. For $1 \le i \le 3$, $M_i \in \mathbb{R}^{1 \times 3}$, so $M_i$ can be viewed as a vector.
 
    \begin{figure}[ht]  
    \centering  
    \includegraphics[width=0.5\textwidth]{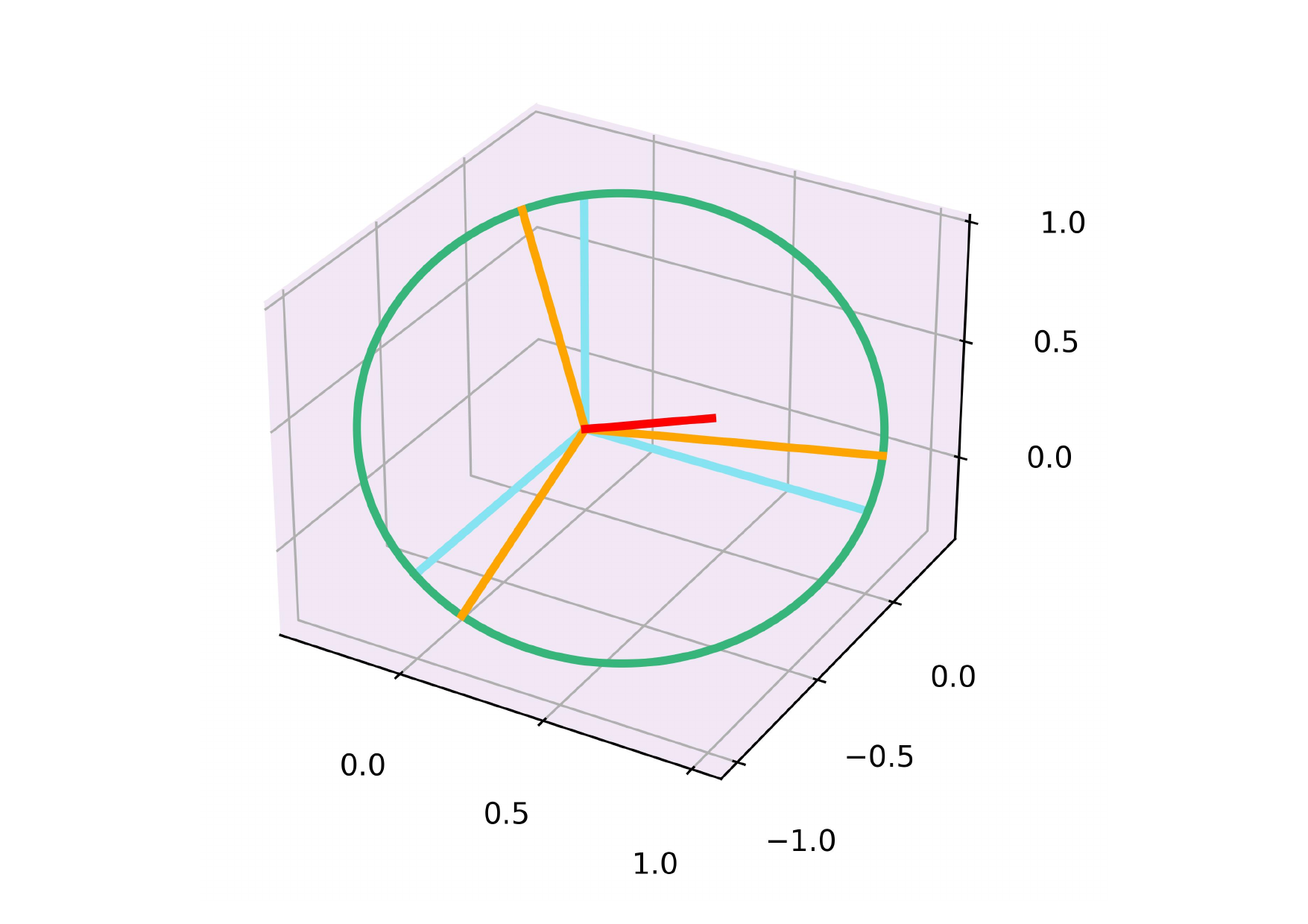}  
    \caption{Example of mutually orthogonal matrices rotation.}  
    \label{fig:sum_mo}
    \end{figure}
    
    As shown in the Figure~\ref{fig:sum_mo}, the blue vectors represent elements in $S$, while the orange vectors represent elements in $S'$. The red vector is the result of summing the blue vectors, and the green trajectory is formed by rotating the blue vectors around the red vector as an axis. The orange vectors are obtained by rotating the blue vectors, and their endpoints lie on the green trajectory. The orange vectors can still form pairwise orthogonal matrices, satisfying Eq.\eqref{eq:sum_mo}. Obviously, there exist countless rotations in the real number field that result in sets $S'$ similar to the orange vectors.

    For any set $S$, assuming $\forall M \in S, M \in \mathbb{R}^{m \times n}$, for $1 \le i \le m$, $M \in S$,
    \begin{equation}
    \label{eq:sum_row}
        \vec{a_i} = \sum_{M \in S} M_i,
    \end{equation}
    where $M_i$ represents a vector composed of the elements in the $i$-th row of $M$. 
    Obtain the rotated vector $M'_i$ by rotating the corresponding row vector around $\vec{a_i}$, and correspondingly form $M'$, finally obtaining $S'$. 
    Since $M'_i$ can be expressed by $M_i$, and for $1 \le i,j \le m, \; i \ne j, \; M_i \cdot M_j = O$, therefore $M'_j \cdot M'_j = O$ holds.

    In this constructed set $S'$, the matrices still form pairwise orthogonal matrices, and $|S| = |S'|$, while satisfying Eq.\eqref{eq:sum_mo}.

    Therefore, there exist sets $S'$ different from $S$, and in the real number field, there exist countless such sets $S'$ satisfy these conditions. This implies that the elements composing $\varphi(\mathcal{S})$ are not unique, and there are countless solutions in the real number field, thus Eq.\eqref{eq:sum_mo_ineq} holds.
\end{proof}

\begin{theorem}
    In the VOMCA, individual variable is confidential.
\end{theorem}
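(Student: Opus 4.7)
The plan is to prove confidentiality by showing that the adversary's view of a single ciphertext $\chi_i = x_i M_i + R_i M_{i+n}$, together with any publicly available aggregated quantities (decode key $dk$, verification key $vk$), reduces to hard inversion problems for the perturbation functions $\phi$ and $\psi$ already described, and to the aggregation function $\varphi$ shown secure in Theorem~\ref{th:sum_mo}. First I would decompose $\chi_i$ into the ``multiplicative'' piece $x_i M_i = \phi(x_i, M_i)$ and the ``additive'' blinding piece $R_i M_{i+n}$. Using the stated property of $\phi$, recovering $x_i$ from $x_i M_i$ alone without $M_i$ is equivalent to guessing a matrix $M'$ with $M' M_i^T = I$, an event with negligible probability since $M_i$ is drawn uniformly from a continuous family of orthogonal matrices.

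Next I would treat the blinding term $R_i M_{i+n}$ as an instance of $\psi$: even if an adversary could somehow isolate $x_i M_i$ from $\chi_i$, the addition of $R_i M_{i+n}$ gives $\psi(x_i M_i, R_i M_{i+n})$, and by the definition of $\psi$, recovery requires guessing $R_i M_{i+n}$, again a negligible-probability event. I would also observe that the constraint $\sum_j R_j = O$ does not help the adversary targeting a single $i$, because conditional on the remaining $n-1$ shares being unknown, $R_i$ is still uniformly distributed over its support. Taken together, these two arguments show that the ciphertext $\chi_i$ viewed in isolation hides $x_i$.

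The hard part is handling the auxiliary public information, in particular $dk = \sum_i M_i^T M_i' + M_{i+n}^T \sum_j (M_{j+n}' + M_{j+2n}')$ and $vk = \{\sum_i \mathcal{V}_i \sum_j A_j M_{i+n}', \{\mathcal{V}_i \sum_j {R''_{ij}}^T\}_i\}$. For $dk$, I would invoke Theorem~\ref{th:sum_mo} directly to conclude that knowledge of a sum of mutually orthogonal matrices is negligibly informative about any individual summand. For $vk$, I would argue that the first component is again a summed-orthogonal-matrix quantity additionally masked by the random $\mathcal{V}_i$, and that each $\mathcal{V}_i {R''_{ij}}^T$ is an application of $\phi$ to $R''_{ij}$ with random matrix $\mathcal{V}_i$, which hides $R''_{ij}$ by the same argument used for $x_i$ in the first paragraph.

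The main obstacle I anticipate is pinning down precisely what ``individual variable is confidential'' means formally: whether the claim is information-theoretic or computational, and whether the adversary is allowed to see other ciphertexts $\{\chi_j\}_{j\neq i}$, since by design $\sum_j \chi_j \cdot dk^T$ recovers $\sum_j x_j$ and therefore intentionally leaks one linear functional. I would handle this via a hybrid-style reduction, replacing each individual $R_i M_{i+n}$ by a fresh random matrix and showing the resulting view is statistically close to the real one under the properties of $\phi$, $\psi$, and $\varphi$; the delicate step is verifying that nothing beyond the intended linear aggregate leaks when the adversary also combines $vk$, $dk$, and all ciphertexts simultaneously, which is where the pairwise orthogonality of $\{M_i\}_{i=1}^{2n}$ carries the essential algebraic weight.
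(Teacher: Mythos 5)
Your proposal is correct and follows essentially the same route as the paper's proof: both arguments rest on the observation that recovering $x_i$ from $\chi_i = x_iM_i + R_iM_{i+n}$ requires stripping the blinding term $R_iM_{i+n}$, which in turn requires knowledge of $M_{i+n}$ (or $M_i$), and that neither published key reveals it --- the decode key because it is a sum of mutually orthogonal matrices (Theorem~\ref{th:sum_mo}), and the verification key because the random $\mathcal{V}_i$ is unknown and cannot be inverted out. Two remarks. First, you quoted the $dk$ and $vk$ of the full secure aggregation scheme; the theorem is stated for VOMCA itself, whose keys are the simpler $dk = \sum_{i=1}^{2n} M_i$ and $vk = \{\sum_{i=1}^n \mathcal{V}_iM_{i+n}, \{\mathcal{V}_i \cdot R_i^T\}_{i=1}^n\}$, to which your arguments apply directly (indeed more cleanly, since this $dk$ is literally an instance of $\varphi$). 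Second, your final paragraph goes beyond what the paper does: the paper's proof is a purely qualitative ``the randomness cannot be eliminated'' argument and never formalizes the adversary's simultaneous view of all ciphertexts together with $dk$ and $vk$, nor does it acknowledge that $\sum_{i=1}^n x_i$ leaks by design; your proposed hybrid reduction replacing each $R_iM_{i+n}$ by fresh randomness, and your point that confidentiality must be stated relative to the intended linear aggregate, are exactly the steps needed to upgrade the paper's heuristic into a simulation-style proof, and you correctly flag them as the delicate --- and, in the paper, absent --- part.
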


\begin{proof}
    For $1 \le i \le n$, $\chi_i$ involves the random number $R_i$, therefore, to ascertain $x_i$, it is necessary to eliminate $R_i$. However, $dk$ contains $M_{i+n}$, making it impossible to eliminate $R_i$.
    
    On the other hand, it is also unfeasible to achieve this through $vk$. Because $\mathcal{V}_i$ is unknown and irreversible, it is impossible to eliminate $\mathcal{V}_i$ from $vk$ to obtain $M_{i+n}$.

    According to theorem~\ref{th:sum_mo}, it's hard to obtain $M_i$ from $dk$ and $vk$.

    Therefore, in the VOMCA, individual variable is confidential.
\end{proof}

\begin{theorem}
\label{th:sec_srfc}
    In the SRFC Mechanism, individual variable is both confidential and immutable.
\end{theorem}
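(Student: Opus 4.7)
The plan is to establish confidentiality and immutability separately, in each case reducing to primitives already proved secure earlier in the paper: the perturbation lemmas for $\phi$, $\psi$, $\varphi$, Theorem on summed mutually orthogonal matrices, and the VOMCA verification bound of Theorem~\ref{th:verify}.

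For \textbf{confidentiality}, I would first examine the individual ciphertext $\alpha_i = x_i M_i^T R_i' M_i + M_{i+n}^T R_i' M_{i+n}$ and argue that, even if $M_i$ and $M_{i+n}$ were known to the adversary, the invertible random matrix $R_i'$ acts as a $\phi$-style perturbation on the $x_i$-carrying block, while the second term acts as an additive $\psi$-style mask that is information-theoretically independent of $x_i$; since $R_i'$ is fresh per client and never revealed in plaintext, recovering $x_i$ from $\alpha_i$ alone is equivalent to guessing $R_i'$. Next I would sweep over the globally published helper matrices $\beta$, $\tau_1$, $\tau_2$, $\tau_3$: each of them is a \emph{sum} indexed over all $i$ of terms built from $M_i$, $M_{i+n}$, and powers of $R_i'^{-1}$. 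By the function $\varphi$ and the argument used in the proof of Theorem on summed mutually orthogonal matrices, such sums admit uncountably many decompositions in $\mathbb{R}$, so no individual $M_i$, $M_{i+n}$, $R_i'^{-1}$, $R_i^{(0)}$, $R_i^{(1)}$, $R_i^{(2)}$ can be extracted, and thus no isolating projection against $\alpha_i$ can be constructed. The extra masks $\zeta_i$ and $R_i^{(j)}$ additionally satisfy $\sum_i \zeta_i = 0$ and $\sum_i R_i = 0$, meaning that only the aggregate is ever unblinded, exactly as in standard secret sharing.

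For \textbf{immutability}, I would argue that the client never uploads $\alpha_i$ directly; instead, $\alpha_i$ is derived by the server from the client's encoded gradient $\delta_i$ together with the KDC-issued helper $\alpha'$, via $(\alpha_i)_j = \delta_i \cdot \alpha'_j$. Therefore, any adversarial manipulation of $\alpha_i$ reduces to an adversarial manipulation of $\delta_i$, which is exactly the ciphertext handled by VOMCA. Invoking Theorem~\ref{th:verify}, the probability that a tampered $\delta_i'$ passes the verification $\mathcal{V}(\delta_i', vk)=1$ while changing the underlying decoded sum is bounded by $\epsilon = 1/\lambda^l$, which is negligible. Combined with the norm check $\|\delta_i\|^2 = 3$, this pins down both the direction and the normalization of the encoded gradient, so the induced $\alpha_i$ is committed to a fixed underlying $x_i$; any deviation is rejected with overwhelming probability.

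The main obstacle I expect is the confidentiality step: it is not enough to look at a single $\alpha_i$ in isolation, because the adversary also sees the linear combinations $\beta \cdot (\sum_i \alpha_i) \cdot \tau_1$ and $\beta \cdot \alpha_i^2 \cdot \tau_j$ used inside the SRFC formula, which by design leak exactly $\sum_i \mathrm{ReLU}(x_i)$. I would argue that these are precisely the intended aggregate outputs and that any further linear or quadratic functional of the transcript either reduces to the aggregate (by the orthogonality relations $M_i^T M_j = \delta_{ij} I$) or leaves the $R_i'$-masked term intact, so no per-client leakage beyond $\mathrm{ReLU}(x_i)$ contribution to the aggregate is possible; a careful bookkeeping argument, analogous to the decoding derivation in the proof of Theorem~\ref{th:srfc}, will close this gap.
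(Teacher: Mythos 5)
There is a genuine gap in your confidentiality step, and you have in fact located it yourself but left it unresolved. The sums $\beta,\tau_1,\tau_2,\tau_3$ being unopenable (via Theorem~\ref{th:sum_mo}) is only half of the paper's argument, because the server does not merely see the final aggregate: it computes the \emph{per-client} intermediate values $\mathcal{F}^{-1}(\beta\alpha_i^2\tau_3) = 2x_iM_i \circ R_i^{(1)} - R_i^{(2)}$ and $\mathcal{F}^{-1}(\beta\alpha_i^2\tau_2 + \mathcal{F}^{-1}(\beta\alpha_i^2\tau_3)) = |x_i|M_i + R_i^{(1)}$ for each $i$ before summation (Eqs.~\eqref{eq:t_2} and \eqref{eq:t_3}). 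These terms expose $|x_i|M_i$ masked only by $R_i^{(1)}$ and $R_i^{(2)}$, and the zero-sum properties $\sum_i \zeta_i = 0$, $\sum_i R_i = 0$ that you invoke do not protect them, since they are visible individually. The paper closes exactly this hole with a dedicated argument that your sketch defers to ``careful bookkeeping'': by the complementary sparsity construction of Eqs.~\eqref{eq:random_sample_matrix1} and \eqref{eq:random_sample_matrix2}, $R_i^{(2)}$ is supported precisely where $R_i^{(1)}$ vanishes, the ranges of the masked entries overlap in a way that makes $R_i^{(1)}$-masked and $R_i^{(2)}$-masked positions statistically indistinguishable (entries fall in $(-2|x_i|^2,0)$ and in $(-|x_i|,|x_i|)$ versus $(0,|x_i|)$ depending on the hidden support), and isolating $R_i^{(2)}$ therefore requires guessing the support pattern correctly, which succeeds with probability about $(\tfrac{1}{2})^{\frac{3}{2}l^2m^2n}$. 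Without this interval-overlap and support-guessing analysis, your claim that ``any further functional either reduces to the aggregate or leaves the $R_i'$-masked term intact'' is an assertion, not a proof; the parameter design of $R_i^{(1)},R_i^{(2)}$ is load-bearing, not bookkeeping.

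Your immutability argument also takes a route the theorem cannot support. You reduce tampering with $\alpha_i$ to tampering with $\delta_i$ and invoke the VOMCA bound of Theorem~\ref{th:verify} plus the norm check $\|\delta_i\|^2 = 3$. But Theorem~\ref{th:sec_srfc} is a statement about the SRFC mechanism itself, where the inputs are abstract $x_i$ and there is no $\delta_i$, no $vk$, and no verifying party; more importantly, the paper later uses SRFC's immutability against the \emph{server} (to rule out differential-aggregation attacks in which the server rescales a single client's contribution and compares outputs), and VOMCA verification is performed by the server, so it provides no leverage against that adversary. The paper's own argument is internal to SRFC: every term in Eqs.~\eqref{eq:t_1}--\eqref{eq:t_3} carries random masks whose cancellation requires the untouched sum, so operating on only selected variables leaves a nonzero residue of random numbers and makes decoding fail. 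Your reduction covers client-side tampering in the full aggregation protocol, which is a legitimate but different (and strictly weaker) statement.
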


\begin{proof}
    According to theorem~\ref{th:sum_mo}, for $1 \le i \le n$, it's hard to obtain $M_i$ from $\beta$, so it's hard to deduce $x_i$ from $\alpha_i$ because $M_i$ is unknown, thus the interference of $R'_i$ cannot be eliminated from the overall context. 
    
    During the computation process in Eq.\eqref{eq:ReLU_encode}, we can refer to theorem \ref{th:srfc} to examine the intermediate variables. 
    
    The random matrix $R_i^{(2)}$ is a sparse matrix, where some of its elements are zero when the corresponding positions in $R_i^{(1)}$ are non-zero. Evidently, it's numerically challenging to discern whether the results or intermediate outcomes in equations \eqref{eq:t_2} and \eqref{eq:t_3} are derived from $R_i^{(1)}$ or $R_i^{(2)}$.
    
    It can be observed that Eq.\eqref{eq:t_1}, Eq.\eqref{eq:t_2} and Eq.\eqref{eq:t_3} involve random numbers. Given these random numbers cannot be eliminated before the summation process, it precludes the possibility of solving for individual variables.

    According to the results of Eq.\eqref{eq:t_2} and Eq.\eqref{eq:t_3}, for $1 \le j \le lm$, $1 \le k \le 2lmn$, if $(R_i^{(1)})_{jk} \ne 0$,

    \begin{equation*}
        2x_i(M_i)_{ij} (R_i^{(1)})_{ij} - (R_i^{(2)})_{ij} \in (-2|x_i|^2, 0)
    \end{equation*}
    \begin{equation*}
        |x_i| \cdot (Mi)_{jk} + (R_i^{(1)})_{jk} \in (-|x_i|, |x_i|),
    \end{equation*}
    if $(R_i^{(1)})_{jk} = 0$,
    % 如果 $(R_i^{(1)})_{jk} = 0$
    \begin{equation*}
        2x_i \cdot (M_i)_{ij} (R_i^{(1)})_{ij} - (R_i^{(2)})_{ij} \in (-2|x_i|^2, 0),
    \end{equation*}
    \begin{equation*}
        |x_i| \cdot (Mi)_{jk} + (R_i^{(1)})_{jk} \in (0, |x_i|).
    \end{equation*}
    Because the distribution of the results from Eq.\eqref{eq:t_2} is the same, it's only need to discuss the results of Eq.\eqref{eq:t_3}.
    According to Eq.\eqref{eq:random_sample_matrix1} and Eq.\eqref{eq:random_sample_matrix2}, approximately $l^2m^2n$ elements in the matrix have $(R_i^{(1)})_{jk}=0$. To isolate $R_i^{(2)}$, it is necessary to correctly guess all the elements where $(R_i^{(1)})_{jk}=0$. According to probability, approximately there are
    \begin{equation*}
        (\frac{1}{2} \cdot \frac{1}{2} + \frac{1}{2}) \cdot 2l^2m^2n = \frac{3}{2}l^2m^2n
    \end{equation*}
    elements in the matrix in $(0, |x_i|)$.
    Therefore, the probability of isolating $R_i^{(2)}$ based on blind guessing is $\frac{1}{2}^{\frac{3}{2}l^2m^2n}$.Therefore, eliminating $R_i^{(2)}$ proves to be challenging, implying the need to utilize Eq.\eqref{eq:t_3} for its removal. Since Eq.\eqref{eq:t_1} includes $R_i^{(0)}$ and Eq.\eqref{eq:t_3} includes $R_i^{(1)}$, it is clear that only by combining the results of Eq.\eqref{eq:t_1} and Eq.\eqref{eq:t_3} can we eliminate $R_i^{(0)}$ and $R_i^{(1)}$.
    
    On the other hand, since all equations involve random numbers, this means that multiplication also affects random numbers. If operations are performed only on certain variables, then it will result in the sum of random numbers not being zero, ultimately making decryption impossible. 
    
    Therefore, in the SRFC scheme, individual variables are both confidential and immutable.
\end{proof}

\begin{theorem}
    In the secure aggregation scheme, the gradients and the cluster identity of each client are confidential.
\end{theorem}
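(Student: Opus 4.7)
The plan is to analyze the adversary's view, which consists of each encoded gradient $\delta_i = \tfrac{g_i}{\|g_i\|}A_j M'_i + \sum_{k=1}^m R''_{ik}A_k M'_{i+n} + \mu_i M'_{i+2n}$, the server-side auxiliary keys $\beta, \tau_1, \tau_2, \tau_3, dk, vk$, the inner-product parameters $\{\alpha'_j\}$, the public cluster bases $\{A_k\}_{k=1}^m$, and the SRFC intermediate outputs, and to argue that for each honest client both the gradient $g_i$ and the cluster index $j$ remain hidden. The argument splits naturally into two pieces: first, that the gradient slot $M'_i$ cannot be isolated from the two masking slots $M'_{i+n}$ and $M'_{i+2n}$; second, that even conditioned on hypothetically isolating the gradient slot, the cluster index is still masked by the structure of the random coefficients $\{R''_{ij}\}$.

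First, I would invoke Theorem~\ref{th:sum_mo} on $\varphi$ to show that the individual slot matrices $\{M'_i, M'_{i+n}, M'_{i+2n}\}_{i=1}^n$ cannot be extracted from $dk = \sum_i M_i^T M'_i + M_{i+n}^T\sum_j(M'_{j+n}+M'_{j+2n})$ or from $vk$, since both are sums of mutually orthogonal matrices and therefore admit uncountably many consistent decompositions. Consequently, any attempt by the server to project $\delta_i$ onto the gradient slot requires guessing a matrix of exponential entropy, and by the perturbation function $\psi$ the additive random masks $\sum_k R''_{ik}A_k M'_{i+n}$ and $\mu_i M'_{i+2n}$ act as one-time pads on $\delta_i$. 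This reduces gradient confidentiality to the VOMCA guarantee already established for encodings of the form $xM + RM'$.

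Second, for cluster-identity confidentiality the key observation is that the masking slot $M'_{i+n}$ carries a linear combination $\sum_{k=1}^m R''_{ik}A_k$ of \emph{all} $m$ cluster bases with unknown random coefficients that only collectively satisfy $\sum_{i=1}^n R''_{ik}=0$. Thus in the hypothetical scenario where an adversary could isolate these two slots, they would observe $\tfrac{g_i}{\|g_i\|}A_j$ in one slot and $\sum_{k=1}^m R''_{ik}A_k$ in the other; because $\{A_k\}$ are themselves mutually orthogonal, any projection onto a fixed $A_{k^\star}$ yields a uniformly random scalar regardless of the true $j$, making the cluster index indistinguishable across all $m$ hypotheses. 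For the layered aggregation stage, I would then reduce security to Theorem~\ref{th:sec_srfc} and to the transformation-key argument of SKT, since the transformation $\mathcal{T}(\delta_i^{j-1}, \sum_k tk_k^j)$ reintroduces fresh randomness $R'_i$ at each layer so every intermediate group aggregate inherits the same masking structure.

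The main obstacle is handling the interaction between the SRFC intermediate values and the gradient encoding \emph{simultaneously}: the server legitimately computes $(\alpha_i)_j = \delta_i \cdot \alpha'_j$ and then applies $\beta, \tau_1, \tau_2, \tau_3$ to carry out ReLU weighting via Eq.\eqref{eq:ReLU_encode}, and I must verify that no linear combination of these revealed products, together with $dk$ and $vk$, admits a distinguisher for either $g_i$ or $j$. I would address this by casing on the slot-index supports of every available quantity and showing that each combination still contains at least one uneliminated random factor drawn from $R''_{ij}$, $\mu_i$, $R'_i$, $R_i^{(0)}$, $R_i^{(1)}$, or $R_i^{(2)}$, whose sums-to-zero constraints only collapse after full aggregation. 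The remaining confidentiality claim then reduces to Theorem~\ref{th:sec_srfc} together with the $\varphi$-based entropy argument for the orthogonal bases.
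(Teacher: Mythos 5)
Your overall architecture matches the paper's proof: both the gradient $g_i$ and the cluster index are protected because the slot matrices $M'_i$, $M'_{i+n}$, $M'_{i+2n}$ cannot be recovered from the summed keys (Theorem~\ref{th:sum_mo} applied to $dk$ and $vk$), the quantities $\alpha'_j$ and the SRFC parameters are checked not to leak them, and the weight-manipulation/differential-aggregation attack is dispatched by Theorem~\ref{th:sec_srfc}. Your closing observation that every combination of revealed products retains an uneliminated random factor whose sum-to-zero constraint only collapses after full aggregation is exactly the paper's reasoning, and your layered-aggregation/SKT discussion actually belongs to the paper's separate compression-scheme theorem --- harmless overreach.

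However, your second paragraph contains a step that fails. You claim that even if an adversary could hypothetically isolate the two slots, observing $\frac{g_i}{\|g_i\|}A_j$ would not reveal $j$ because ``any projection onto a fixed $A_{k^\star}$ yields a uniformly random scalar regardless of the true $j$.'' That is true only of the mask slot $\sum_{k=1}^m R''_{ik}A_k$ (and even there, since $\|R''_{ij}\|=1$ in Eq.\eqref{eq:ek}, the coefficients are constrained rather than uniform); for the gradient slot it is false. The bases $\{A_k\}_{k=1}^m$ are mutually orthogonal and available to the server --- the update phase computes $gA_j^T$ --- so $\bigl(\frac{g_i}{\|g_i\|}A_j\bigr)A_{k^\star}^T$ equals $\frac{g_i}{\|g_i\|}$ when $k^\star = j$ and $0$ otherwise, a zero/nonzero test that identifies $j$ immediately and recovers the normalized gradient. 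There is no second layer of masking inside the gradient slot: as the paper's proof puts it, learning $g_i$ together with its cluster identity is \emph{equivalent} to learning $M'_i$, so the entire burden rests on the slot-isolation impossibility you establish in your first paragraph. For the same reason, describing the additive terms as ``one-time pads on $\delta_i$'' is imprecise --- because $M'_{i+n}$ and $M'_{i+2n}$ are orthogonal to $M'_i$, those terms do not pad the gradient slot's content at all; secrecy comes solely from the unknown orthogonal decomposition. Dropping the flawed hypothetical and resting both confidentiality claims on the $\varphi$-entropy argument plus Theorem~\ref{th:sec_srfc} recovers the paper's proof.
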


\begin{proof}
    For all $i$ satisfying $1 \le i \le n$, we cannot directly extract $g_i$ and its cluster identity from $\delta_i$. If we know these two pieces of information, it is equivalent to knowing $M'_i$. According to $dk$, knowing the information of $M'_i$ is equivalent to knowing $M_i$. Obviously, we cannot obtain this information through $\alpha'_i$, nor can we obtain it through the parameters generated in SRFC. 
    
    On the other hand, the server cannot obtain $g_i$ by altering the aggregation weights of a client and obtaining the differential aggregation multiple times. This can be explained by Theorem \ref{th:sec_srfc}.

    Therefore, in the secure aggregation scheme, the gradients and the cluster identity of each client are confidential.
\end{proof}

\begin{theorem}
\label{th:skc}
    In the SKT Mechanism, for $1 \le i \le n$, the $x_i$, the keys $M_i$ and $M_{i+n}$, and the conversed keys $M'_{\mathcal{C}(i)}$ and $M'_{\mathcal{C}(i+n)}$ are confidential.
\end{theorem}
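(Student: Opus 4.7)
The plan is to reduce each of the five confidentiality claims to one of the indistinguishability primitives already established in the paper, namely the perturbation functions $\phi$ and $\psi$ together with Theorem~\ref{th:sum_mo} on sums of mutually orthogonal matrices. First I would enumerate exactly what an adversary observes in the SKT mechanism: the input ciphertext $\chi_i = x_i M_i + R_i M_{i+n}$, the transformation key $tk_i = M_i^T M'_{\mathcal{C}(i)} + M_{i+n}^T\bigl(M'_{\mathcal{C}(i+n)} + R_i^{-1} R'_i\bigr)$, the transformed ciphertext $\chi_i \cdot tk_i = \chi'_i + R'_i$, and eventually the aggregation key $dk' = \sum_{j=1}^m M'_j$. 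Each of $M_i, M_{i+n}, M'_{\mathcal{C}(i)}, M'_{\mathcal{C}(i+n)}, R_i, R'_i$ will be treated as an unknown to the adversary, and the goal is to show that no linear-algebraic combination of the observations isolates any of the five protected quantities.

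For the secrecy of $x_i$, I would reuse the VOMCA argument: $x_i$ appears only as $x_i M_i$ inside $\chi_i$, additively masked by the independent random term $R_i M_{i+n}$, and the corresponding quantity in the transformed ciphertext is further masked by the fresh $R'_i$. By the indistinguishability of $\psi$, no assignment of $x_i$ can be preferred over another without first recovering one of the orthogonal factors, which brings us back to the other four claims. For $M_i$ and $M_{i+n}$ (and symmetrically $M'_{\mathcal{C}(i)}, M'_{\mathcal{C}(i+n)}$), the key structural fact is that every occurrence of these matrices in $tk_i$ is inside a product of two mutually orthogonal unknown factors. Any invertible $Q$ yields an equivalent factorization $(M_i^T Q^{-1})(Q M'_{\mathcal{C}(i)})$ with identical product, so neither factor is recoverable from the product alone; I would formalize this as an instance of the indistinguishability exhibited by $\phi$. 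The sum structure of $tk_i$ then lets me invoke Theorem~\ref{th:sum_mo} to argue that even the decomposition of $tk_i$ into its two summands is not unique, so the adversary cannot even isolate the summand containing a single $M_i$, let alone the matrix itself. The same theorem applied to $dk' = \sum_j M'_j$ rules out leakage of the converted keys through the aggregation interface.

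The main obstacle, and the step I would spend the most effort on, is closing off algebraic attacks that combine multiple views simultaneously: the adversary sees $\chi_i$, $tk_i$, and $\chi_i \cdot tk_i$ together, and could try to stack them into a joint linear system over the unknowns. The argument I would push is a counting-plus-simulation one. On the counting side, each of $M_i, M_{i+n}, M'_{\mathcal{C}(i)}, M'_{\mathcal{C}(i+n)}$ enters the observations only through a product with another secret orthogonal matrix, and $R_i, R'_i$ are sampled independently and freshly, so any candidate assignment to the four protected matrices can be completed to a globally consistent view by adjusting $R_i$ and $R'_i$ accordingly; this yields a simulator that reproduces the adversary's view from any assignment. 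On the structural side, the sum $M_i^T M'_{\mathcal{C}(i)} + M_{i+n}^T(\cdots)$ in $tk_i$ falls squarely within the hypothesis of Theorem~\ref{th:sum_mo}, so the decomposition is information-theoretically ambiguous. Combining these two observations gives the desired confidentiality for $x_i$ and all four matrices, completing the proof.
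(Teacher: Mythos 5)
Your overall strategy---masking by fresh randomness plus the information-theoretic ambiguity of products and sums of orthogonal matrices---is the same machinery the paper's proof relies on, and your enumeration of the adversary's view is, if anything, more systematic than the paper's. But there are two genuine problems. First, your premise that each of $M_i, M_{i+n}, M'_{\mathcal{C}(i)}, M'_{\mathcal{C}(i+n)}$ ``will be treated as an unknown to the adversary'' fails in the threat model the theorem must cover: $\mathcal{C}$ is many-to-one (that is the whole point of SKT inside layered aggregation), and any client $a$ can compute its own converted key as $M_a \cdot tk_a = M'_{\mathcal{C}(a)}$. If $\mathcal{C}(a) = \mathcal{C}(b)$ with $a \neq b$, a corrupted client $a$ therefore knows $M'_{\mathcal{C}(b)}$ exactly, and your counting/simulation argument collapses because one of the four ``unknowns'' is pinned. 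The paper devotes a separate case analysis to precisely this scenario: from $\mathcal{T}(\chi_b, tk_b) = x_b M'_{\mathcal{C}(b)} + R_b M'_{\mathcal{C}(b+n)} + R'_b$, an adversary knowing $M'_{\mathcal{C}(b)}$ can strip everything except the additive mask $R'_b$, so confidentiality of $x_b$ rests solely on $R'_b$, which can be eliminated only by summation over the group---after which individual data is already aggregated. Your proposal mentions the freshness of $R'_i$ in passing for the generic case but never identifies this collision/insider case, so the claimed confidentiality of the conversed keys and of $x_i$ under collusion is not established.

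Second, your central simulation claim is overstated: it is not true that any candidate assignment to the four protected matrices can be completed to a consistent view by adjusting $R_i$ and $R'_i$. Given $\chi_i$ and $tk_i$ jointly, the unknowns are coupled: a candidate pair $(\hat M_i, \hat M_{i+n})$ must reproduce the row-space structure of the observations (the residual $(I - \hat M_i^T\hat M_i - \hat M_{i+n}^T\hat M_{i+n})\, tk_i$ must vanish), and then $\hat M'_{\mathcal{C}(i)} = \hat M_i \, tk_i$ is forced rather than freely chosen, with $\hat R_i = \chi_i \hat M_{i+n}^T$ and $\hat R'_i$ determined from the $tk_i$ equation. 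The set of consistent assignments is thus an orbit under suitable orthogonal rotations (exactly the rotation construction behind Theorem~\ref{th:sum_mo}), not the full product space, and the ambiguity argument survives only after being restated that way. Relatedly, invoking Theorem~\ref{th:sum_mo} ``squarely'' on the two summands of $tk_i$ is loose, since $M_i^T M'_{\mathcal{C}(i)}$ and $M_{i+n}^T(\cdot)$ satisfy the cross-orthogonality condition but not the diagonal condition of the paper's definition (their diagonal products are projectors, not the identity)---though in fairness the paper's own appeals to that theorem are similarly informal.
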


\begin{proof}
    For $1 \le i \le n$, $1 \le j \le m$, given $tk_i$ and $\chi_i$, obtaining $x_i$ is equivalent to knowing $M_i$. Grasping $M_i$ is equivalent to knowing $M'_{\mathcal{C}(i)}$, which, in turn, can be inferred from $M_{i+n}$. Due to the random number $R'_i$, it is impossible to infer $M_{i+n}$ from $M'_{\mathcal{C}(i)}$. Evidently, the information cannot be retrieved via $dk$ as it is structured as a summation, thereby preventing its separation.

    Further, consider a specific case: given the knowledge of $\mathcal{C}$, let's assume $a \neq b$ and $\mathcal{C}(a) = \mathcal{C}(b)$. In this scenario, an adversary can deduce $M'_{\mathcal{C}(a)}$ from $M_a$, but is unable to derive $x_b$ from $\mathcal{T}(\chi_i, tk_i)$ and $M'_{\mathcal{C}(a)}$, as $\mathcal{T}(\chi_i, tk_i)$ includes the random number $R'_b$ that can only be eliminated through summation. Once summed, individual data becomes obscured.

    Hence, within the SKT Mechanism, for every $1 \le i \le n$, the elements $x_i$, the keys $M_i$ and $M_{i+n}$, as well as the corresponding keys $M'_{\mathcal{C}(i)}$ and $M'_{\mathcal{C}(i+n)}$, are all maintained as confidential.
\end{proof}

\begin{theorem}
    In the compression scheme, the gradients and the cluster identity of each client are confidential.
\end{theorem}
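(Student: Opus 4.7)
The plan is to reduce this statement to the previously established confidentiality theorems for the secure aggregation scheme and the SKT mechanism, handling the two components of compression (gradient segmentation and layered aggregation) in turn, and then verifying their composition.

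First, I would argue that gradient segmentation does not weaken confidentiality. Each segment $g_{ij}$ is encoded with the same VOMCA-style structure as the full gradient, using random matrices $R''_{ijk}$ and $\mu_{ijk}$ whose norms are rescaled (so $\sum_{k=1}^s \lVert R''_{ijk}\rVert^2 = 1$ and $\lVert \mu_{ijk}\rVert^2 = 1/s$) precisely to preserve the verification relation $\lVert \delta_i \rVert^2 = 3$. Because the structural encoding of a segment is identical (up to a publicly fixed scalar) to the one-shot encoding used in the secure aggregation scheme, I can apply the earlier secure-aggregation confidentiality theorem segment-by-segment to conclude that neither the segment $g_{ij}$ nor the cluster identity $\mathcal{A}(i)$ is recoverable from the published encoded segments. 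The reformulated $\alpha'_{ik}$ introduces no additional leakage, since it only rescales mask terms by the constant $1/s$.

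Second, I would argue that layered aggregation preserves confidentiality via SKT. Each transformation round applies a \emph{summed} transformation key, $\delta_i^j = \mathcal{T}(\delta_i^{j-1}, \sum_{k=1+\nu}^{\xi_j+\nu} tk_k^j)$. By Theorem~\ref{th:sum_mo}, the summation hides each individual $tk_k^j$ among the orthogonal components, and by Theorem~\ref{th:skc} a single SKT call leaks neither the underlying variable nor the keys $M_i$, $M_{i+n}$, $M'_{\mathcal{C}(i)}$, $M'_{\mathcal{C}(i+n)}$. I would then close the argument by induction on the transformation round $j$: if $\delta_i^{j-1}$ hides both the gradient segment and the cluster identity, then $\delta_i^j$ does as well, because the transformation injects a fresh random term $R_i^{-1} R'_i$ that cannot be separated without the final group-level decoding key, which only materializes at the last round. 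The same induction applies to the $\alpha$-transformations, since they follow the identical form as Eq.~(\ref{eq:la}).

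The main obstacle will be controlling the interaction between the two mechanisms: showing that segmentation does not inadvertently create cross-segment or cross-round correlations that an adversary could exploit. Concretely, the same verification parameters and transformation keys are shared across all $s$ segments of a client, so I must verify that the adversary's joint view of $\{\delta_{i,k}^j\}_{k,j}$ still only yields the final aggregated gradient, rather than any linear combination that isolates a particular client or cluster. This reduces to showing that the per-segment masks $R''_{ijk}$ and $\mu_{ijk}$ remain pairwise non-cancelling across segments, which follows from the norm constraints stated above together with the zero-sum property $\sum_i R''_{ijk} = 0$; once this is verified, the secure-aggregation and SKT theorems compose cleanly to yield confidentiality of both the gradients and the cluster identities in the full compression scheme.
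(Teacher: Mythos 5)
Your overall decomposition matches the paper's own proof: you handle gradient segmentation by reducing each segment's encoding to the secure-aggregation confidentiality theorem, and layered aggregation by reducing each transformation round to SKT security (Theorem~\ref{th:skc}), with Theorem~\ref{th:sum_mo} covering the summed transformation keys. Your round-by-round induction is a more explicit rendering of the paper's one-line deferral and is sound as far as it goes.

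The gap is in your closing step. You assert that the per-segment masks remain pairwise non-cancelling across segments because of the norm constraints $\sum_{k=1}^s \lVert R''_{ijk}\rVert^2 = 1$ and $\lVert \mu_{ijk}\rVert^2 = \frac{1}{s}$ together with the zero-sum property $\sum_{i} R''_{ij} = 0$. Neither delivers that conclusion. The norm constraints are calibrated so that the verification check $\lVert \delta_i \rVert^2 = 3$ still passes after segmentation; whether some linear combination of a client's segments cancels the masks is an algebraic property of the mask matrices themselves, on which norms say nothing. The zero-sum property is taken \emph{across clients} and is precisely what makes the aggregate decryptable, so invoking it for segment-level confidentiality points in the wrong direction. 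The actual mechanism, which the paper's proof states, is freshness of randomness: the random matrices added to different segments of the same client are sampled independently, so no cross-segment combination eliminates them; and $R''_{ij}$ is re-generated by the KDC before every round (the broadcast-phase update of $ek_i^1$), which is what blocks the differential attack in which the server subtracts a client's encoded gradients across rounds. You name the cross-round threat as an obstacle but never invoke the per-round $R''$ refresh that defeats it, so as written your argument leaves that attack unexcluded.
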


\begin{proof}
    In Gradient Segmentation, it is impossible to obtain information about the keys across different segments of the same client because the random numbers added to each segment are different. Furthermore, the server cannot calculate the gradient changes of individual clients by subtracting gradients from different rounds, as $R''$ is updated before each round. 

    Regarding Layered Aggregation, for simplicity, we will not elaborate on the parts similar to the aggregation scheme. The security of the key transformation phase can be explained through the security of SKT as Theorem \ref{th:skc}, as this process does not expose information about the gradients or keys.

    Therefore, in the compression scheme, the gradients and the cluster identity of each client are confidential.
\end{proof}

\section{Convergence Guarantees}
In this section, we will provide a detailed convergence analysis for RFCA. Similar to the concepts in IFCA \cite{IFCA}, we define
\[
    F^j(\theta) := \mathbb{E}_{z \sim D_j}[f(\theta; z)],
\]

\[
    F_i(\theta; Z) = \frac{1}{|Z|} \sum_{z \in Z} f(\theta; z),
\]

where \( z \) denotes a data point, and \( Z \) represents the set of all such points.

\begin{definition}
If for any \( \theta, \theta' \), the function \( F \) satisfies the following condition.  
\[
F(\theta') \geq F(\theta) + \nabla F(\theta)^\top (\theta' - \theta) + \frac{\lambda}{2} \|\theta' - \theta\|^2,
\]  
\( F \) is said to be \( \lambda \)-strongly convex.
\end{definition}
\begin{definition}
If for any \( \theta, \theta' \), the function \( F \) satisfies the following condition.
\[
\|\nabla F(\theta') - \nabla F(\theta)\| \leq L \|\theta' - \theta\|,
\]  
\( F \) is said to be \( L \)-smooth.
\end{definition}

\begin{assumption}
For $j \in [k]$, $F^j(\theta')$ is $\lambda$-strongly convex and $L$-smooth.
\end{assumption}

\begin{assumption}
For every \( \theta \) and every \( j \in [k] \), the variance of \( f(\theta; z) \) is upper bounded by \( \eta^2 \) when \( z \) is sampled according to \( D_j \). Formally,  
\[
\mathbb{E}_{z \sim D_j} \left[ \left( f(\theta; z) - F^j(\theta) \right)^2 \right] \leq \eta^2.
\]
\end{assumption}

\begin{assumption}
For every \( \theta \) and every \( j \in [k] \), the variance of \( \nabla f(\theta; z) \) is upper bounded by \( v^2 \) when \( z \) is sampled according to \( D_j \). Formally,  
\[
\mathbb{E}_{z \sim D_j} \left[ \| \nabla f(\theta; z) - \nabla F^j(\theta) \|_2^2 \right] \leq v^2.
\]
\end{assumption}

\begin{assumption}
Assume that $\max_{j \in [k]} \|\theta_j^*\| \lesssim 1$.
We also assume that \( \|\theta_j^{(0)} - \theta_j^*\| \leq \left( \frac{1}{2} - \alpha_0 \right) \sqrt{\frac{\lambda}{L}} \Delta, \, \forall j \in [k]\)  
, \( n \gtrsim \frac{k \eta^2}{\alpha_0^2 \lambda^2 \Delta^4}\)  
, \( p \gtrsim \frac{\log(mn')}{m}\)  
, and that \( \Delta \geq O\left( \max \left\{ \alpha_0^{\frac{-2}{5}} (n')^{\frac{-1}{5}}, \alpha_0^{\frac{-1}{3}} m^{\frac{-1}{6}} (n')^{\frac{-1}{3}} \right\} \right)\).
\end{assumption}

\begin{assumption}
% 假设对于任意 $i \in [m]$ 余弦相似度计算得到的权重 $w_i$ 服从分布 $\mathcal{N}(\mu, \sigma^2)$， 其中 $0 < \mu < 1$ 且 $\sigma < 1$. 
Assume that for \( i \in [m] \), the weight \( w_i \), obtained from the cosine similarity calculation, follows the distribution \( \mathcal{N}(\mu, \sigma^2) \), where \( 0 < \mu < 1 \) and \( \sigma < 1 \).
\end{assumption}

\subsection{Proof of Theorem \ref{th:convergence}}
% 由于条件与IFCA相似，因此我们仅集中于gradient descent iteration. 不失一般性，我们专注于 $\theta_1$.
Since the conditions are similar to IFCA, we focus only on the gradient descent iteration. Without loss of generality, we concentrate on \( \theta_1 \).
\[
    \|\theta^+_1 - \theta^*_1\| = \|\theta_1 - \theta^*_1 - \frac{\gamma}{w}\sum_{i \in S_1}{w_i \nabla F_i(\theta_1)} \|,
\]
where \( F_i(\theta) := F_i(\theta; Z_i) \), with \( Z_i \) being the set of data points on the \( i \)-th client used to compute the gradient, and \( S_1 \) is the set of indices corresponding to the first cluster.
% 其中，\( F_i(\theta) := F_i(\theta; Z_i) \)，其中 \( Z_i \) 是第 \( i \) 个工作节点上用于计算梯度的数据点集合，而 \( S_1 \) 是第一个聚类对应的索引集合。
% 由于
Since
\[
    S_1 = (S_1 \cap S^*_1) \cup (S_1 \cap S^*_1),
\]
we have
% 所以我们有
\[
    \begin{aligned}
    \| \theta^+_1 - \theta^*_1 \| &= \|\underbrace{\theta_1 - \theta^*_1 - \frac{\gamma}{w}\sum_{i \in S_1 \cap S^*_1}{w_i \nabla F_i(\theta_1)}}_{T_1} \\
    &-  \underbrace{\frac{\gamma}{w}\sum_{i \in S_1 \cap \overline{S^*_1}}{\nabla F_i(\theta_1)}}_{T_2}\|.
    \end{aligned}
\]
% 根据三角不等式，我们有
According to triangle inequality, we have
\[
    \|\theta^+_1 - \theta^*_1\| \le \|T_1\| + \|T_2\|.
\]
% 我们定义 $\hat{\gamma}$ 如下
We define $\hat{\gamma}$ as
\[
    \hat{\gamma} = \frac{\gamma}{w} |S_1 \cap S^*_1|.
\]
% 我们首先关注 $\|T_1\|$ 的边界，将其分割为如下形式
We first focus on the bound of \( \|T_1\| \) and decompose it into the following form.
\[
    \begin{aligned}
    T_1 &= \underbrace{\theta_1 - \theta^*_1 - \hat{\gamma} \mu \nabla F^1(\theta_1)}_{T_{11}} \\
    &+ \hat{\gamma}(\underbrace{\mu \nabla F^1(\theta_1) - \frac{1}{|S_1 \cap S^*_1|}\sum_{i \in S_1 \cap S^*_1}{w_i \nabla F_i(\theta_1)}}_{T_{12}}).
    \end{aligned}
\]
% 根据函数的强凸性，我们有
Based on the strong convexity of the function, we have
\[
    \|T_{11}\| = \|\theta_1 - \theta^*_1 - \hat{\gamma} \mu \nabla F^1(\theta_1)\| \le (1-\frac{\hat{\gamma}\mu\lambda L}{\lambda + L})\|\theta_1 - \theta^*_1\|.
\]
% 另一方面，我们可以计算得到 $\|T_{12}\|^2$ 的期望
On the other hand, we can compute the expectation of \( \|T_{12}\|^2 \) as
\[
    E[\|T_{12}\|^2] = \frac{v^2\sigma^2 + \sigma^2 u^2 + v^2\mu^2}{n'|S_1 \cap S^*_1|}.
\]
% 又因为
Since $u^2 \le 1,\; \mu^2 \le 1,\; \sigma^2 \le 1$,
% 所以
we have
\[
    E[\|T_{12}\|] \le \frac{\sqrt{v^2\sigma^2 + \sigma^2 u^2 + v^2\mu^2}}{\sqrt{n'|S_1 \cap S^*_1|}} \le \frac{\sqrt{2v^2 + 1}}{\sqrt{n'|S_1 \cap S^*_1|}}.
\]
By Markov's inequality, for any $\delta_0 > 0$, with probability at least $1 - \delta_0$, we have

\begin{equation}
\label{eq:t_12}
    \|T_{12}\| \le \frac{\sqrt{2v^2 + 1}}{\delta_0\sqrt{n'|S_1 \cap S^*_1|}}.
\end{equation}
% 根据IFCA中的推导，我们可以得到
Based on the derivations in IFCA, we can obtain
\[
    |S_1 \cap S^*_1| \ge \frac{1}{4}p_1m.
\]
% 所以有
So we have
\[
    \hat{\gamma} \ge \frac{pm}{4Lw}.
\]
% 此外，我们可以认为 $w = \mu m$，如果 $m$ 的数量足够大。综上，我们可以得到
In addition, we can assume that \( w = \mu m \) if the number of \( m \) is large enough. Therefore, we can obtain
\begin{equation}
\label{eq:t_11}
    \|T_{11}\| \le (1-\frac{pm\mu\lambda}{8wL})\|\theta_1 - \theta^*_1\| = (1-\frac{p\lambda}{8L})\|\theta_1 - \theta^*_1\|.
\end{equation}
% 根据 Eq.\eqref{t_12} 和 Eq.\eqref{t_11} 我们可以得到
Based on Eq. \eqref{eq:t_12} and Eq. \eqref{eq:t_11}, we can obtain
\begin{equation}
\label{eq:t1}
     \|T_{1}\| \le (1-\frac{p\lambda}{8L})\|\theta_1 - \theta^*_1\| + \frac{2\sqrt{2v^2 + 1}}{\delta_0L\sqrt{pmn'}}.
\end{equation}

% 接下来我们关注 $\|T_2\|$ 的边界，首先我们定义
Next, we focus on the bound of \( \|T_2\| \). First, we define
\[
    T_{2j} := \sum_{S_1 \cap S^*_j}{w_i \nabla F_i(\theta_i)}.
\]
% $T_2$ 可以表示为
And \( T_2 \) can be expressed as $T_2 = \frac{\gamma}{w}\sum^k_{j=2}T_{2j}$.

% 结合 IFCA 的推导，我们可以得到
Combining the derivations from IFCA, we can obtain
\begin{equation*}
    \begin{aligned}
        T_{2j} =& |S_1 \cap S^*_j| \mu \nabla F^j(\theta_i) \\
        &+ \sum_{S_1 \cap S^*_j}{(\mu \nabla F^j(\theta_i) - w_i \nabla F_i(\theta_i))},
    \end{aligned}
\end{equation*}
\begin{equation*}
    \begin{aligned}
       E(\|\sum_{S_1 \cap S^*_j}{\mu \nabla F^j(\theta_i) - w_i \nabla F_i(\theta_i)}\|^2) \\
       \le \frac{\sqrt{(2v^2 + 1)|S_1 \cap S^*_j|}}{\sqrt{n'}}.
    \end{aligned}
\end{equation*}

With probability at least $1-\delta_1$, we have

\begin{equation}
\label{eq:t_2j}
    \|T_{2j}\| \le c_1|S_1 \cap S^*_j| +\frac{\sqrt{(2v^2 + 1)|S_1 \cap S^*_j|}}{\delta_1\sqrt{n'}}.
\end{equation}

With probability at least $1 - k\delta_1$, apply Eq.~\eqref{eq:t_2j} to all $j \ne 1$. Then, we have with probability at least $1 - k\delta_1$,

\begin{equation}
\label{eq:t2s}
     \|T_{2}\| \le \frac{c_1\gamma}{w}|S_1 \cap \overline{S^*_1}| + \frac{\gamma\sqrt{(2kv^2 + k)|S_1 \cap \overline{S^*_1}|}}{\delta_1w\sqrt{n'}}.
\end{equation}

% 根据 IFCA 的推导，我们可以得到
Based on the derivations in IFCA, we can obtain
\[
    E(|S_1 \cap \overline{S^*_1}|) \le c_1\frac{\eta^2m}{\alpha^2\lambda^2\Delta^4n'}.
\]
% 结合 Markov's inequality，我们有
Combining Markov's inequality, we have
\begin{equation}
\label{eq:s}
    |S_1 \cap \overline{S^*_1}| \le c_1\frac{\eta^2m}{\delta_2\alpha^2\lambda^2\Delta^4n'}.
\end{equation}

% 结合 Eq.\eqref{eq:t2s} 和 Eq.\eqref{eq:s} 我们可以得到
Combining Eq.\eqref{eq:t2s} and Eq.\eqref{eq:s}, we can obtain
\begin{equation}
\label{eq:t2}
    \|T_2\| \le  c_1\frac{\eta^2m}{\delta_2 \alpha^2\lambda^2\Delta^4 wn'} + c_2\frac{\eta\sqrt{2kmv^2 + km}}{\delta_1\sqrt{\delta_2}\alpha\lambda L\Delta^2wn'}.
\end{equation}

% 结合 Eq.\eqref{eq:t1} 和 Eq.\eqref{eq:t2} 我们可以得到
Combining Eq.\eqref{eq:t1} and Eq.\eqref{eq:t2}, we can obtain
\[
    \begin{aligned}
    \|\theta^+_1 - \theta^*_1\| \le& (1-\frac{p\lambda}{8L})\|\theta_1 - \theta^*_1\| + \frac{2\sqrt{2v^2 + 1}}{\delta_0L\sqrt{pmn'}} \\
    &+  c_1\frac{\eta^2m}{\delta_2 \alpha^2\lambda^2\Delta^4 wn'} + c_2\frac{\eta\sqrt{2kmv^2 + km}}{\delta_1\sqrt{\delta_2}\alpha\lambda L\Delta^2wn'}.
    \end{aligned}
\]

% We choose $\delta_0 = \frac{\delta}{4}$, $\delta_1 = \frac{\delta}{4k}$, and $\delta_2 = \frac{\delta}{4}$, at least $1 - \delta$, we have
Let \( \delta \geq \delta_0 + k\delta_1 + \delta_2 + 2\exp(-cpm) \) be the failure probability of this iteration, and choose \( \delta_0 = \frac{\delta}{4}, \delta_1 = \frac{\delta}{4k}, \delta_2 = \frac{\delta}{4} \).  
Then the failure probability is upper bounded by \( \delta \) as long as \( 2\exp(-cpm) \leq \frac{\delta}{4} \), which is guaranteed by our assumption that \( p \gtrsim \frac{1}{m} \log(mn') \).  
Therefore, we conclude that with probability at least \( 1 - \delta \),
\[
    \begin{aligned}
    \|\theta^+_1 - \theta^*_1\| \le& (1-\frac{p\lambda}{8L})\|\theta_1 - \theta^*_1\| + \frac{c_0\sqrt{2v^2 + 1}}{\delta L\sqrt{pmn'}} \\
    &+  c_1\frac{\eta^2m}{\delta \alpha^2\lambda^2\Delta^4 wn'} + c_2\frac{\eta k\sqrt{2kmv^2 + km}}{\delta^{\frac{3}{2}}\alpha\lambda L\Delta^2wn'},
    \end{aligned}
\]
which completes the proof.
\end{document}